\documentclass[letterpaper, 10 pt, conference]{ieeeconf} 

\IEEEoverridecommandlockouts  
\overrideIEEEmargins  


\usepackage{amsmath}
\usepackage{amssymb}
\usepackage{amsthm}
\usepackage{mathtools}
\usepackage{derivative}
\usepackage{xcolor}
\usepackage{bm}
\usepackage{mathrsfs}

\usepackage[noadjust]{cite}
\usepackage{url}
\usepackage[hidelinks]{hyperref}

\usepackage{graphicx}
\usepackage[export]{adjustbox} 
\graphicspath{{figures/}}
\usepackage{stfloats}

\newtheorem{theorem}{Theorem}
\newtheorem{lemma}{Lemma}

\newtheorem{corollary}{Corollary}

\theoremstyle{definition}
\newtheorem{definition}{Definition}

\newtheorem{remark}{Remark}

\DeclareMathOperator*{\argmin}{arg\,min}


\title{\LARGE \bf Control Barrier Function Synthesis for \\ 
Nonlinear Systems with Dual Relative Degree
}

\author{Gilbert Bahati, Ryan K. Cosner, Max H. Cohen, Ryan M. Bena, and Aaron D. Ames
\thanks{This work is supported by BP and by TII under project \#A6847.}
\thanks{The authors are with the Department of Mechanical and Civil Engineering, California Institute of Technology, Pasadena, CA $\{$\texttt{gbahati}, \texttt{rkcosner}, \texttt{maxcohen}, \texttt{ryanbena}, \texttt{ames}$\}$\texttt{@caltech.edu}
.}
}

\begin{document}

\maketitle

\newcommand{\setmap}[3]{#1:#2 \mathrel{\vcenter{\mathsurround0pt
\ialign{##\crcr
		\noalign{\nointerlineskip}$\rightarrow$\crcr
		\noalign{\nointerlineskip}$\rightarrow$\crcr
		}}}%
		#3}

\newcommand{\naturals}{\mathbb{N}}
\newcommand{\re}{\mathbb{R}}
\newcommand{\real}{\mathbb{R}}
\newcommand{\realnonneg}{\mathbb{R}_{\ge 0}}
\newcommand{\realpos}{\mathbb{R}_{> 0}}
\newcommand{\until}[1]{[#1]}
\newcommand{\map}[3]{#1:#2 \rightarrow #3}
\newcommand{\qedA}{~\hfill \ensuremath{\square}}
\newcommand\scalemath[2]{\scalebox{#1}{\mbox{\ensuremath{\displaystyle #2}}}}
\newcommand{\interior}{\operatorname{int}}

\newcommand{\longthmtitle}[1]{\mbox{}{\textit{(#1):}}}
\newcommand{\setdef}[2]{\{#1 \; | \; #2\}}
\newcommand{\setdefb}[2]{\big\{#1 \; | \; #2\big\}}
\newcommand{\setdefB}[2]{\Big\{#1 \; | \; #2\Big\}}
\newcommand*{\SetSuchThat}[1][]{} 
\newcommand*{\MvertSets}{%
    \renewcommand*\SetSuchThat[1][]{%
        \mathclose{}%
        \nonscript\;##1\vert\penalty\relpenalty\nonscript\;%
        \mathopen{}%
    }%
}
\MvertSets 
\DeclarePairedDelimiterX \Set [2] {\lbrace}{\rbrace}
    {\,#1\SetSuchThat[\delimsize]#2\,}

\newcommand{\dt}{\mathrm{d}t}
\newcommand{\dy}{\mathrm{d}y}
\newcommand{\dtau}{\mathrm{d}\tau}
\newcommand{\hout}{h_\mathrm{out}}
\newcommand{\dhout}{\dot{h}_\mathrm{out}}
\newcommand{\ThetaD}{\Theta_{\mathrm{D}}}
\newcommand{\vy}{\vec{\mb{y}}}
\newcommand{\ud}{\bu_{1,\mathrm{D}}}

\newcommand{\Cc}{\mathcal{C}}
\newcommand{\Tc}{\mathcal{T}}
\newcommand{\Dc}{\mathcal{D}}
\newcommand{\Hc}{\mathcal{H}}
\newcommand{\Kc}{\mathcal{K}}
\newcommand{\Pc}{\mathcal{P}}
\newcommand{\Uc}{\mathcal{U}}
\newcommand{\Sc}{\mathcal{S}}
\newcommand{\Xc}{\mathcal{X}}
\newcommand{\Yc}{\mathcal{Y}}
\newcommand{\Vc}{\mathcal{V}}
\newcommand{\Zc}{\mathcal{Z}}
\newcommand{\Ec}{\mathcal{E}}
\newcommand{\Rm}{\mathcal{\mathbb{R}}}
\DeclarePairedDelimiter{\Brack}{\lBrack}{\rBrack}
\newcommand{\bbrack}[1]{{
  \mathchoice
    {\left\lbrack \!\left\lbrack #1 \right\rbrack \!\right\rbrack} 
    {\left\lbrack\!\left\lbrack #1 \right\rbrack\!\right\rbrack} 
    {} 
    {} 
  }
}

\newcommand{\defeq}{\triangleq}

\newcommand{\vr}{\varepsilon}
\newcommand{\nom}{{\operatorname{nom}}}
\newcommand{\m}{{\operatorname{min}}}
\newcommand{\des}{{\operatorname{des}}}
\newcommand{\on}{{\operatorname{on}}}
\newcommand{\off}{{\operatorname{off}}}
\newcommand{\fl}{{\operatorname{FL}}}
\newcommand{\Lie}{\mathcal{L}}
\newcommand{\qp}{{\operatorname{QP}}}

\newcommand{\ie}{i.e., }
\newcommand{\todo}[1]{{\color{cyan} Todo: #1}}

\newcommand{\ba}{\mathbf{a}}
\newcommand{\bb}{\mathbf{b}}
\newcommand{\be}{\mathbf{e}}
\renewcommand{\bf}{\mathbf{f}} 
\newcommand{\bc}{\mathbf{c}}
\newcommand{\bff}{\mathbf{f}}
\newcommand{\bg}{\mathbf{g}}
\newcommand{\dhat}{\hat{\mathbf{d}}}
\newcommand{\bk}{\mathbf{k}}
\newcommand{\bp}{\mathbf{p}}
\newcommand{\bq}{\mathbf{q}}
\newcommand{\bu}{\mathbf{u}}
\newcommand{\bv}{\mathbf{v}}
\newcommand{\bx}{\mathbf{x}}
\newcommand{\by}{\mathbf{y}}
\newcommand{\bz}{\mathbf{z}}
\newcommand{\bA}{\mathbf{bA}}
\newcommand{\bB}{\mathbf{B}}
\newcommand{\bD}{\mathbf{D}}
\newcommand{\bC}{\mathbf{C}}
\newcommand{\bF}{\mathbf{F}}
\newcommand{\bJ}{\mathbf{J}}
\newcommand{\bG}{\mathbf{G}}
\newcommand{\bK}{\mathbf{K}}
\newcommand{\bP}{\mathbf{P}}
\newcommand{\bW}{\mathbf{W}}
\newcommand{\bY}{\mathbf{Y}}
\newcommand{\bQ}{\mathbf{Q}}

\newcommand{\mb}[1]{\mathbf{#1}}
\newcommand{\R}{\mathbb{R}}
\newcommand{\bs}[1]{\boldsymbol{#1}}

\newcommand{\btau}{\boldsymbol{\tau}}
\newcommand{\bomega}{\boldsymbol{\omega}}

\newcommand{\bpo}{\bp_{\mathrm{obs}}}

\newcommand{\gilbertSys}{\textcolor{blue}{Gilbert system}}

\thispagestyle{empty}
\pagestyle{empty}

\begin{abstract}
Control barrier functions (CBFs) are a powerful tool for synthesizing safe control actions; however, constructing CBFs remains difficult for general nonlinear systems.
In this work, we provide a constructive framework for synthesizing  CBFs for systems with \textit{dual relative degree}---where different inputs influence the outputs at two different orders of differentiation; this is common in systems with orientation-based actuation, 
such as unicycles and quadrotors. In particular, we propose \textit{dual relative degree CBFs (DRD-CBFs)} and show that these DRD-CBFs can be constructively synthesized and used to guarantee system safety. 
Our method constructs DRD-CBFs by leveraging the dual relative degree property---combining a CBF for an integrator chain with a Lyapunov function certifying the tracking of safe inputs generated for this linear system.
We apply these results to dual relative degree systems, both in simulation and experimentally on hardware using quadruped and quadrotor robotic platforms.
\end{abstract}

\section{Introduction}

Control invariance has become a powerful descriptor of safety requirements in modern control systems, where tools such as reachability \cite{BansalCDC17,TomlinTAC21}, model-predictive control (MPC) \cite{borrelli2017predictive,WabersichTAC23}, and control barrier functions (CBFs) \cite{AmesTAC17} provide a framework for synthesizing safety-critical controllers. 
These approaches come with various trade-offs that may lead to benefits/drawbacks in different situations.
One benefit of CBFs over other techniques is their efficient online computation;
however, generating valid CBFs -- those consistent with the system dynamics -- from user-defined safety requirements (e.g., position constraints) is a challenging problem. Recently, various methods have emerged that enable the systematic synthesis of CBFs for relevant classes of systems, such as chains of integrators \cite{MurrayACC20}, strict feedback systems \cite{AndrewCDC22}, and partially feedback linearizable systems \cite{cohen2024constructive}. While effective in certain cases, each of these approaches imposes certain structural requirements on the system dynamics, which may not hold for systems of interest. Methods such as exponential \cite{SreenathACC16} and high order CBFs \cite{WeiTAC22} require less structure of the dynamics, but place more restrictive requirements on the safety constraints \cite{CohenARC24}.

\begin{figure}[t!]
    \centering
    \includegraphics[width=1.\linewidth]{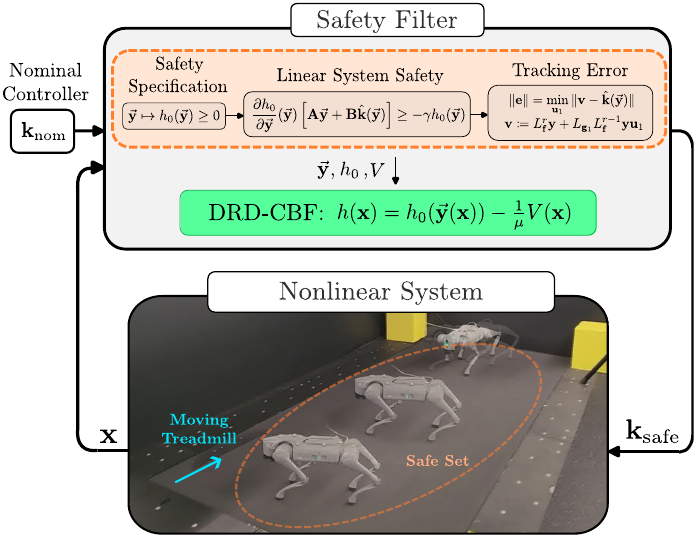}
    \vspace{-6mm}
    \caption{A nominal control block diagram for enforcing safety constraints on a nonlinear system via DRD-CBFs. Hardware experiments can be viewed at {\color{blue}\url{https://www.youtube.com/watch?v=l41vZb4WNqA}}.
    \label{fig:hero_figure}}
    \vspace{-6mm}
\end{figure}

Another important class of methods for CBF synthesis are those leveraging reduced-order models (ROMs) -- simplified representations of the original system -- in which safe commands generated by ROMs are tracked by more complex full-order models \cite{TamasRAL22,CohenARC24}. These methods generally avoid the difficulties in CBF synthesis by focusing on dramatically simplified systems, e.g., by modeling a quadrupedal robot as a unicycle, and achieve system-level safety through the convergence of the full-order model to the ROM with 
sufficient tracking rates. Thus, safe-set synthesis can be performed on a simplified system and then extended to the full-order system without directly considering higher-order dynamics. While these synthesis methods are often conservative, they are capable of constructing control invariant subsets of user-defined safety requirements for rather complex systems. 

User-defined safety requirements are often (sometimes implicitly) expressed through a choice of desired outputs, such as a system's position in Euclidean space \cite{cohen2024constructive}. However, these outputs and their Lie derivatives may not fully capture all the system states (e.g., orientation), limiting the ability to control the entire system effectively. This often arises when the outputs do not possess a valid relative degree with respect to all the control inputs. In such a situation, it is often possible to dynamically extend the system to include higher derivatives of the inputs until a valid relative degree is achieved \cite{Isidori}, and methods such as integral CBFs \cite{ames2020integral} may be leveraged for safety-critical control of dynamically extended systems \cite{WeiACC22}.
However, the resulting integral-based controllers typically introduce delays to the nonlinear system.
Dynamic extension is also closely related to techniques based on differential flatness \cite{fliess1999lie,murray1995differential}, which have demonstrated success in controlling highly dynamic systems such as quadrotors \cite{lee_geometric_2010,MellingerICRA2011}. Although there exists CBF synthesis methods tailored to specific differentially flat systems and safety constraints \cite{wu2016safety,SreenathACC16-quad}, a general characterization of these ideas remains undeveloped.

Inspired by the above ideas, we propose a new method of synthesizing CBFs for a special, but relevant, class of systems -- those with \textit{dual relative degree}, where different components of the input influence the outputs at two different orders of differentiation---capturing systems such as unicycles and quadrotors.
Compared to works like \cite{cohen2024constructive},
this relaxes the requirement that the outputs must have a relative degree in the traditional sense. Rather than dynamically extending our system to achieve a valid relative degree and design a safe controller, we use a ROM-inspired approach \cite{TamasRAL22} by leveraging a Lyapunov function to certify tracking of a class of linear systems by the nonlinear dynamics. The original safety constraint and this Lyapunov function enable the synthesis of a CBF for the full nonlinear system, yielding a \emph{dual relative degree} CBF (DRD-CBF). From this perspective, our work can be seen as extending methods tailored for specific differentially flat systems and safety constraints \cite{wu2016safety,SreenathACC16-quad} to a wider class of systems and CBFs. 

The main contributions of this work are three-fold.
First, we provide a characterization of systems with \textit{dual relative degree}.
Second, we provide a constructive framework for synthesizing CBFs for these systems. 
Third, we illustrate the utility of our approach with in-depth case studies, including hardware demonstrations on quadrupeds and quadrotors. 

\section{Technical Background}

\subsection{Control Barrier Functions}
Consider\footnote{The Euclidean norm is denoted as $\Vert \cdot \Vert$. We denote that $\alpha$ is in the class of extended class-$\mathcal{K}$ infinity functions as  $\alpha \in \mathcal{K}_{\infty}^e$. For a full column/row rank matrix $\mb{A}$, we use $\mb{A}^{\dagger}$ to denote the left/right Moore-Penrose pseudoinverse. With an abuse of terminology, we refer to a function as smooth if it is differentiable as many times as necessary.} 
a nonlinear control-affine system of the form: 
\begin{align}
    \dot{\mb{x}} = \mb{f}(\mb{x}) + \mb{g}(\mb{x}) \mb{u}, \label{eq:ol-sys}
\end{align}
where $\mb{x} \in \R^n $ is the system state, $\mb{u}\in \R^m$ is the input, $\mb{f}: \R^n \to \R^n$ is the drift dynamics, and $\mb{g}:\R^n \to \R^{n\times m}$ is the actuation matrix. Using a state-feedback controller $\mb{k}: \R^n \to \R^m$, one obtains the closed-loop system: 
\begin{align}
    \dot{\mb{x}} =\mb{f}_\textup{cl}(\mb{x}) =  \mb{f}(\mb{x}) + \mb{g}(\mb{x})\mb{k}(\mb{x}) \label{eq:cl-sys}.
\end{align}
When the closed-loop dynamics are locally Lipschitz, they admit a unique continuously differentiable solution $t\mapsto \bx(t)$ for any given initial state $\bx_0\in\R^n$, which, for ease of exposition, we assume exists for all $t\geq0$.
 In this paper, we formalize the notion of safety using the concept of forward invariance. In particular, we consider safety requirements characterized by the forward invariance of some user-defined set $\mathcal{C}\subset\R^n$ given as the $0$-superlevel set of some continuously differentiable function $h: \R^n \to \R$:
\begin{align}
    \mathcal{C} \coloneqq & \{ \mb{x} \in \R^n ~|~ h(\mathbf{x}) \geq 0  \}.\label{eq:safe-set}
\end{align}
CBFs are one tool to synthesize control actions that enforce the forward invariance (i.e., safety) of this set $\mathcal{C}$.

\begin{definition}[Control Barrier Function (CBF) \cite{AmesTAC17}]\label{def:cbf}
    A continuously differentiable function $h: \R^n \to \R$ defining a set $\mathcal{C}\subset\R^n$ as in \eqref{eq:safe-set} is a \textit{control barrier function} (CBF) for \eqref{eq:ol-sys} if there exists an $\alpha \in \mathcal{K}^e_\infty$ such that for all $\mb{x}\in \R^n: $
    \begin{align}\label{eq:cbf_constraint}
      \sup_{\bu \in \real^m} \Bigg [ \ \underbrace{\frac{\partial  h}{\partial \bx}(\bx)\bf(\bx)}_{L_\bf h(\bx)} + \underbrace{\frac{\partial  h}{\partial \bx}(\bx)\bg(\bx)}_{L_\bg h(\bx)} \bu     \Bigg ] > -\alpha(h(\bx)).      
  \end{align}
\end{definition}

A CBF induces a point-wise set of safe inputs: 
\begin{equation*}
    \mathscr{K}_\textup{CBF}(\mb{x}) = \{ \mb{u} \in \R^m | L_\mb{f}h(\mb{x}) + L_\mb{g}h(\mb{x}) \mb{u} \geq - \alpha (h(\mb{x}))\}, 
\end{equation*}
such that any locally Lipschitz controller $\bk$ satisfying $\bk(\bx)\in\mathscr{K}_\textup{CBF}(\mb{x})$ enforces forward invariance of $\mathcal{C}$ \cite{AmesTAC17}.
Given a nominal controller $\bk_{\mathrm{nom}}:\real^n \rightarrow \real^m$, the following quadratic programming-based control law \textit{filters} the nominal controller by minimally adjusting $\bk_{\mathrm{nom}}$ to find the nearest safe action:
\begin{align}
    \bk(\bx) = &\argmin_{\bu \in \real^m} &&\|\bu - \bk_{\mathrm{nom}}(\bx)\|^2 \tag{CBF-QP} \label{eq:cbfqp}\\
    & \quad 
 \mathrm{s.t.} &&L_\bf h(\bx) + L_\bg h(\bx)\bu  \geq - \alpha(h(\bx)). \nonumber
\end{align}
While the above controller guarantees safety under ideal circumstances, it is often useful to robustify such controllers to unexpected disturbances. One tool for addressing this issue is an \emph{input-to-state-safe  CBF} (ISSf-CBF)\footnote{We restrict $\alpha$ to be a linear function for the remainder of
the paper.}. 
\begin{definition}[ISSf-CBF \cite{AnilLCSS22}]
    A continuously differentiable function $h\,:\,\R^n\rightarrow\R$ is said to be an \textit{input-to-state safe control barrier function } (ISSf-CBF) for \eqref{eq:ol-sys} on $\mathcal{C}$ as in \eqref{eq:safe-set} if there exists $\gamma>0$ and $\varepsilon > 0$ such that for all $\mb{x} \in \R^n $: 
    \begin{align}
        \sup_{\mb{u}\in \R^m}\left[ L_\mb{f}h(\mb{x}) + L_\mb{g}h(\mb{x}) \mb{u} \right] >  - \gamma h(\mb{x}) + \frac{\Vert L_\mb{g}h(\mb{x})\Vert^2}{\varepsilon}. \label{eq:issf}
    \end{align}
\end{definition}
ISSf-CBFs include a robustness margin 
$\tfrac{1}{\varepsilon}\|L_{\bg}h(\bx)\|^2$ to mitigate the impact of disturbances while providing practical safety guarantees \cite{AnilLCSS22}. Note that if $h$ satisfies \eqref{eq:cbf_constraint}, then it also satisfies \eqref{eq:issf} as robustness is only added when there is control actuation available (i.e., when $\Vert L_\mb{g}h(\bx) \Vert\neq\mathbf{0}$). Thus, \eqref{eq:issf} increases the robustness of safety to disturbances while ensuring that $\mathscr{K}_{\textup{CBF}}(\mb{x})$ remains nonempty.

\subsection{Outputs and Coordinate Transformations}
While CBFs provide a powerful approach for synthesizing safety-critical controllers, their success relies on knowledge of a function $h$ satisfying Def. \ref{def:cbf}. 
In general, constructing CBFs for nonlinear systems can be mapped to a
backwards reachability problem \cite{ChoiCDC21}; however, when the dynamics satisfy certain structural properties, the synthesis of CBFs can be made systematic \cite{AndrewCDC22,cohen2024constructive}. This structure may be revealed by selecting a set of outputs with a relative degree.
\begin{definition}[Relative Degree $r$ \cite{Isidori}]\label{def:relative-degree}
    A smooth function $\by\,:\,\R^n\rightarrow\R^p$ has \emph{relative degree} $r\in\mathbb{N}$ for \eqref{eq:ol-sys} if:
    \begin{align}
        L_{\bg}L_{\bf}^{i}\by(\bx) \equiv \mathbf{0}, &\quad \forall i\in\{0,\dots,r-2\},\label{eq:zero-lie-derivatives} \\
        \mathrm{rank}(L_{\bg}L_{\bf}^{r-1}\by(\bx)) = p, & \quad \forall \bx\in\R^n.\label{eq:decoupling-full-rank}
    \end{align}
\end{definition}
Given an output with relative degree $r$, define:
\begin{equation}\label{eq:output-coordinates}
      \vec{\mb{y}}(\bx) \coloneqq 
    \begin{bmatrix}
        \by(\bx)^\top \;L_{\bf}\by(\bx)^\top \;  \cdots \;L_{\bf}^{r-1}\by(\bx)^\top
    \end{bmatrix}^\top
    \in\real^{pr},
\end{equation}
as a new set of partial coordinates
with dynamics:
\begin{align}\label{eq: linear output dynamcis}
    \frac{\mathrm{d}}{\dt}\vec{\mb{y}}(\bx) & = 
    \underbrace{
    \begin{bmatrix}
        \mb{0} & \mb{I}_{p(r-1)}\\
        \mb{0} & \mb{0}
    \end{bmatrix}}_{\mb{A}} \vec{\mb{y}}(\bx)
    \! + \!
    \underbrace{
    \begin{bmatrix}
        \mb{0} \\
        \mb{I}_p
    \end{bmatrix}}_{\mb{B}}
    \mb{v}\\
    \mb{v} &\coloneqq L_{\bf}^{r}\by(\bx) + L_{\bg}L_{\bf}^{r-1}\by(\bx)\bu,     \label{eq: y_r mapping}
\end{align}
where \eqref{eq: y_r mapping} is viewed as an input to \eqref{eq: linear output dynamcis}. The output dynamics in \eqref{eq: linear output dynamcis} are a chain of integrators and techniques such as \cite{AndrewCDC22,MurrayACC20} may be employed to construct CBFs. Importantly, when $\by$ has relative degree $r$, any controller $\bv= \hat{\bk}(\vec{\by})$ designed for \eqref{eq: linear output dynamcis} may be transferred back to \eqref{eq:ol-sys} via:
\begin{equation}\label{eq:input-transformation}
    \bu = L_{\bg}L_{\bf}^{r-1}\by(\bx)^\dagger\left[\hat{\bk}(\vec{\by}(\bx)) - L_{\bf}^{r}\by(\bx) \right],
\end{equation}
where the right psuedo-inverse $L_{\bg}L_{\bf}^{r-1}\by(\bx)^\dagger$ exists given \eqref{eq:decoupling-full-rank}. When the output coordinates $\vec{\by}$ are physically relevant to the original safety specification for \eqref{eq:ol-sys}, this provides a systematic approach to synthesizing CBFs and safety-critical controllers for complex nonlinear systems. In general, however, the outputs relevant to the safety specification for \eqref{eq:ol-sys} may not have a valid relative degree, precluding the ability to directly transfer inputs from the output integrator system \eqref{eq: linear output dynamcis} back to the nonlinear system \eqref{eq:ol-sys} via the controller transformation \eqref{eq:input-transformation}. In what follows, we provide a framework for relating inputs of the output integrator system \eqref{eq: linear output dynamcis} to those of the nonlinear system \eqref{eq:ol-sys} under weaker conditions than Def. \ref{def:relative-degree}, and demonstrate how this leads to the synthesis of CBFs for practically relevant systems. 

\section{Main Theoretical Contribution}
\subsection{Systems with Dual Relative Degree}\label{sec:new-relative-degree}
In this section, we present a methodology to synthesize CBFs and safety-critical controllers for systems whose outputs may not have a valid relative degree, but satisfy other desirable properties that enable the construction of CBFs. We characterize these systems using the notion of \emph{dual relative degree}, which captures the situation in which inputs influence the outputs at two different orders of differentiation.  
To facilitate our approach, we assume that \eqref{eq:ol-sys} has multiple control inputs, i.e., $m\geq2$, and thus may be written as:
\begin{equation}\label{eq:multi-input-system}
    \dot{\bx} = \bf(\bx) + \underbrace{\bg_{1}(\bx)\bu_{1} + \bg_{2}(\bx)\bu_{2}}_{\bg(\bx)\bu},
\end{equation}
where $\bu_{1}\in\R^{m_1}$, $\bu_{2}\in\R^{m_2}$ such that $m=m_1 + m_2$ with $\bu=(\bu_1,\bu_2)$, while $\bg_1\,:\,\R^n\rightarrow\R^{n\times m_1}$ and $\bg_2\,:\,\R^n\rightarrow\R^{n\times m_2}$ decompose $\bg$ as $\bg(\bx)=\begin{bmatrix}\bg_1(\bx) &\bg_{2}(\bx) \end{bmatrix}$. Given these dynamics and an output $\by\,:\,\R^n\rightarrow\R^p$ for \eqref{eq:multi-input-system}, the inputs affect these outputs via:
\begin{equation}
    \underbrace{
    L_{\bg}L_{\bf}^{i}\by(\bx)}_{p\times m} = \big[\;\underbrace{L_{\bg_1}L_{\bf}^i\by(\bx)}_{p\times m_1}  \; \underbrace{L_{\bg_2}L_{\bf}^i\by(\bx)}_{p\times m_2} \;\big].
\end{equation}
Rather than requiring $\by$ to have a relative degree, we will require it to have a dual relative degree, defined as follows:
\begin{definition}\label{def:dual-relative-degree}(Dual Relative Degree)
    A multi-input system \eqref{eq:multi-input-system} with smooth output $\by\,:\,\R^n\rightarrow\R^p$ is said to have \emph{dual relative degree} $(r,q)\in\mathbb{N}\times\mathbb{N}$ 
    if \eqref{eq:zero-lie-derivatives} holds and for all $\bx\in\R^n$:
    \begin{align}
        L_{\bg_2}L_{\bf}^{r-1}\by(\bx)= \mathbf{0}, \label{eq: Lg_2h = 0 condition}\\
        \mathrm{rank}(L_{\bg_1}L_{\bf}^{r-1}\by(\bx)) = m_1, \label{eq: Lg_1h none 0 condition} \\
        \mathrm{rank}(L_{\bg_2}L_{\bf}^{q-1}L_{\bg_1}L_{\bf}^{r-1}\by(\bx))=m_2.\label{eq:an-insane-number-of-lie-derivatives}
    \end{align}
\end{definition}
Dual relative degree systems represent those whose inputs influence the output at two different levels of differentiation, and capture systems such as unicycles and quadrotors. While we will not explicitly leverage \eqref{eq:an-insane-number-of-lie-derivatives}, it is often implicit in our other assumptions (e.g., on the existence of a tracking control Lyapunov function in Def. \ref{def:tracking-clf}) and is thus included to better characterize the systems to which our approach applies.

Similar to Def. \ref{def:relative-degree}, when 
\eqref{eq:multi-input-system} has a dual relative degree,
we may define a set of output coordinates and corresponding output dynamics as in \eqref{eq:output-coordinates} and \eqref{eq: linear output dynamcis}, respectively. However, when $\by$ does not have a relative degree in the sense of Def. \ref{def:relative-degree}, there does not exist a one-to-one correspondence between inputs of \eqref{eq: linear output dynamcis} and \eqref{eq:multi-input-system}. Despite this, 
if \eqref{eq:multi-input-system} has a dual relative degree according to Def. \ref{def:dual-relative-degree},
then  given a desired controller $\hat{\bk}\,:\,\R^{pr}\rightarrow\R^p$ for the linear output dynamics \eqref{eq: linear output dynamcis}, we may 
find the input $\bu_1$ which actuates the outputs in the manner closest to that of 
$\hat{\mb{k}}$ 
via least-squares minimization:
\begin{align}
    \bk_{1}(\bx) := & \argmin_{\bu_1\in\R^{m_1}}\|L_{\bf}^{r}\by(\bx) + L_{\bg_1}L_{\bf}^{r-1}\by(\bx)\bu_1 - \hat{\bk}(\vec{\by}(\bx))\|^2 \nonumber
 \\
    = & L_{\bg_1}L_{\bf}^{r-1}\by(\bx)^{\dagger}\left[\hat{\bk}(\vec{\by}(\bx)) - L_{\bf}^r\by(\bx)\right], \label{eq:k1}
\end{align}
where $L_{\bg_1}L_{\bf}^{r-1}\by(\bx)^{\dagger}$ is the left pseudo-inverse, which exists under the rank assumption \eqref{eq: Lg_1h none 0 condition} from Def. \ref{def:dual-relative-degree}. Taking $\bu_1=\bk_{1}(\bx)$ produces the partial closed-loop system dynamics:
\begin{align}\label{eq:partial-closed-loop-system}
    \dot{\bx} &= \bf(\bx) + \bg_1(\bx)\bk_1(\bx) + \bg_2(\bx)\bu_2 \\
    &\eqqcolon \bf_1(\bx)+ \bg_2(\bx)\bu_2. \label{eq:partial-closed-loop-system2}
\end{align}
While $\bk_1(\mb{x})$ produces inputs closest to $\hat{\mb{k}}(\vec{\mb{y}}(\mb{x}))$, it may not be able to completely eliminate the error between the output actuation $\mb{v}$ in \eqref{eq: y_r mapping} and the desired linear actuation $\hat{\mb{k}}(\vec{\by}(\mb{x}))$. We write this error explicitly as:
\begin{align}\label{eq:tracking-error}
    & \mb{e}(\bx) \coloneqq L_{\bf}^{r}\by(\bx) + L_{\bg_1}L_{\bf}^{r-1}\by(\bx)\bk_1(\bx) - \hat{\bk}(\vec{\by}(\bx)),
    \\
    & = \left(L_{\bg_1}L_{\bf}^{r-1}\by(\bx)L_{\bg_1}L_{\bf}^{r-1}\by(\bx)^\dagger - \mathbf{I}  \right)( \hat{\mb{k}}(\vec{\mb{y}}(\mb{x}))  - L_\mb{f}^r\mb{y}(\mb{x}) ) \nonumber   
\end{align}
which must be compensated for to ensure safety. This will be achieved using Lyapunov-based techniques.

\begin{definition}[Tracking Control Lyapunov Function]\label{def:tracking-clf}
    A continuously differentiable function $V: \R^n \to \R_{\geq 0}$ is a \textit{tracking control Lyapunov function} (CLF) for a control affine system \eqref{eq:ol-sys} with respect to  error function $\mb{e}: \R^n \to \R^{m_1}$ 
    if there exists $\beta, \lambda > 0 $ such that for all $\mb{x} \in \R^n $: 
    \begin{align}
        V(\mb{x})  \geq & \beta \Vert \mb{e}(\mb{x}) \Vert^2 \textup{ and } \label{eq:clf-lower-bound}\\ 
        \inf_{\mb{u} \in \R^m } L_{\mb{f}}V(\mb{x}) + & L_{\mb{g}}V(\mb{x}) \mb{u} \leq -\lambda V(\mb{x}). \label{eq:k2}
    \end{align}
\end{definition}
We will use this tracking CLF to ensure convergence of our error $\mb{e}(\mb{x})$ to zero for the partial closed-loop system \eqref{eq:partial-closed-loop-system2}.

\subsection{CBF Synthesis for Dual Relative Degree Systems}
We now demonstrate how the paradigm in Sec. \ref{sec:new-relative-degree} may be used to synthesize safety-critical controllers. For this, we consider a dual relative degree system of the form \eqref{eq:multi-input-system} with an output $\by\,:\,\R^n\rightarrow\R^p$ 
and output dynamics (\ref{eq: linear output dynamcis},\ref{eq: y_r mapping}). 
We then consider a desired safe set on the output coordinates $\vec{\mb{y}}$:
\begin{equation}\label{eq:C-y}
    \mathcal{C}_{\by}\coloneqq\{\bx\in\R^n \,| \, h_{0}(\vec{\by}(\bx)) \geq 0 \},
\end{equation}
and suppose that $h_0\,:\,\R^{pr}\rightarrow\R$ is a CBF for \eqref{eq: linear output dynamcis} 
with $\bv$ viewed as a ``virtual" input to the linear system.
This assumption guarantees the existence of a smooth\footnote{As discussed in \cite{CohenLCSS23,cohen2024constructive,CohenARC24}, the existence of CBF (or ISSf-CBF) satisfying \eqref{eq:cbf_constraint} (or  \eqref{eq:issf}) with a \emph{strict} inequality guarantees the existence of a controller, as smooth as the dynamics and CBF, satisfying the corresponding barrier condition. Thus, if $h_0$ is a CBF for \eqref{eq: linear output dynamcis}, we may, without loss of generality, construct a smooth feedback controller $\hat{\bk}$ satisfying \eqref{eq:issf-cbf-linear}, with examples of such controllers available in \cite{CohenLCSS23,cohen2024constructive,CohenARC24}.} controller $\hat{\bk}\,:\,\R^{pr}\rightarrow\R^p$ enforcing the ISSf-CBF condition \cite{CohenLCSS23}:
\begin{equation}\label{eq:issf-cbf-linear}
    \pdv{h_0}{\vec{\by}}(\vec{\by})\left[\mb{A}\vec{\by} + \mb{B}\hat{\bk}(\vec{\by}) \right] > - \gamma h_0(\vec{\by}) + \frac{1}{\varepsilon}\left\Vert \pdv{h_0}{\vec{\by}}(\vec{\by})\mb{B} \right\Vert^2,
\end{equation}
for all $\vec{\by}\in\R^{pr}$ for some $\gamma,\varepsilon>0$. While $h_0$ is a CBF for \eqref{eq: linear output dynamcis} with relative degree $r$, it is not necessarily a CBF for \eqref{eq:multi-input-system} with dual relative degree $(r,q)$, and it may be impossible to apply $\hat{\bk}$ to \eqref{eq:multi-input-system} directly.

Inspired by the reduced-order model methods of \cite{TamasRAL22}, we synthesize a CBF for the system with dual relative degree \eqref{eq:multi-input-system} by augmenting $h_0$ with a scaled tracking CLF, $\frac{-1}{\mu}V(\mb{x})$ for some $\mu > 0 $, to account for the error $\mb{e}(\mb{x})$ between $\mb{k}_1 $ and $\hat{\mb{k}}$. 
We formally define this construction as: 
\begin{definition}[Dual Relative Degree CBF (DRD-CBF)]
    Consider system \eqref{eq:multi-input-system} with dual relative degree $(r,q)$.
    If $h_0: \R^{pr} \to \R$ is a CBF for the linear system \eqref{eq: linear output dynamcis} with degree $r$, $\hat{\mb{k}}: \R^{pr} \to \R^p$ is a continuously differentiable function satisfying \eqref{eq:issf-cbf-linear} for some $\gamma, \varepsilon > 0$, and  $V: \R^n \to \R_{\geq 0 }$ is a tracking control Lyapunov function for \eqref{eq:partial-closed-loop-system} with respect to error function \eqref{eq:tracking-error} for some $\beta,\lambda > 0 $, then the function:
    \begin{gather}\label{eq:our-cbf}
        h(\bx) \coloneqq h_0(\vec{\by}(\bx)) - \frac{1}{\mu}V(\bx) \\
        \textup{ with } \mu > 0 \textup{ such that } \lambda \geq \gamma + \frac{\varepsilon \mu }{4 \beta }, \label{eq:parameter_assumption}
    \end{gather}
    is a \textit{dual relative degree CBF} (DRD-CBF) for \eqref{eq:multi-input-system}.
\end{definition}

The condition in \eqref{eq:parameter_assumption} dictates the relation between the convergence rate $\lambda$ of the tracking CLF, 
the safety of $h_0$ (determined by $\hat{\mb{k}}$) via $\gamma$ and the ISSf constant $\varepsilon$, and the scaling parameters $\mu$ and $\beta$. Intuitively, the condition \eqref{eq:parameter_assumption} can be satisfied by increasing the error tracking speed of $V$ by increasing $\lambda$, increasing the conservatism of $\hat{\mb{k}}$ by decreasing $\gamma$ and $\varepsilon$, or by balancing the scaling of $h_0$ and $V$ via $\mu$ or balancing the scaling of $V$ and $\mb{e}$ via $\beta$.  

Next, in Theorem \ref{thm:main}, we prove that all DRD-CBFs are valid CBFs for system \eqref{eq:multi-input-system} by showing that the existence of control actions, derived from $\hat{\mb{k}}$ and the tracking CLF, certifies that \eqref{eq:our-cbf} satisfies the CBF constraint \eqref{eq:cbf_constraint}. Thus, we show that DRD-CBFs are a special class of CBFs for dual relative degree systems that can be directly synthesized using a CBF $h_0$ for a linear integrator system \eqref{eq: linear output dynamcis} and a tracking CLF $V$. 

\begin{theorem}\label{thm:main}
    Consider a system of the form \eqref{eq:multi-input-system} with dual relative degree ($r, q$). If $h: \R^n \to \R$ is a dual relative degree CBF for \eqref{eq:multi-input-system} as in \eqref{eq:our-cbf}, then it is also a CBF \textcolor{black}{and any Lipschitz controller satisfying \eqref{eq:cbf_constraint} for $h$ renders $\mathcal{C}= \{ \mb{x}\in \R^n ~|~ h(\mb{x}) \geq 0 \} \subset \mathcal{C}_{\mb{y}}$ safe}.     
\end{theorem}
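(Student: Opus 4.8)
The plan is to first check the inclusion $\mathcal{C}\subset\mathcal{C}_{\by}$ and then verify that the DRD-CBF $h$ in \eqref{eq:our-cbf} satisfies the CBF inequality \eqref{eq:cbf_constraint} with the linear class-$\mathcal{K}^e_\infty$ function $\alpha(s)=\gamma s$; forward invariance of $\mathcal{C}$ then follows from the standard CBF result \cite{AmesTAC17}, and the inclusion upgrades this to keeping trajectories from $\mathcal{C}$ inside $\mathcal{C}_{\by}$. The inclusion is immediate: since $V\geq 0$ and $\mu>0$, \eqref{eq:our-cbf} gives $h(\bx)\leq h_0(\vec{\by}(\bx))$, so $h(\bx)\geq 0$ implies $h_0(\vec{\by}(\bx))\geq 0$.

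For the CBF inequality, at each $\bx$ I would lower-bound $\sup_{\bu}[L_{\bf}h(\bx)+L_{\bg}h(\bx)\bu]$ by evaluating the bracket at the particular input $\bu=(\bk_1(\bx),\bk_2(\bx))$, where $\bk_1(\bx)$ is the least-squares input defined in \eqref{eq:k1} and $\bk_2(\bx)$ realizes the tracking-CLF decrease $L_{\bf_1}V(\bx)+L_{\bg_2}V(\bx)\bk_2(\bx)\leq-\lambda V(\bx)$ for the partial closed-loop system \eqref{eq:partial-closed-loop-system2}, which $V$ admits by \eqref{eq:k2}. Along $\dot{\bx}=\bf_1(\bx)+\bg_2(\bx)\bk_2(\bx)$ the output coordinates obey $\tfrac{\mathrm{d}}{\dt}\vec{\by}=\mb{A}\vec{\by}+\mb{B}\bv$ with $\bv=L_{\bf}^{r}\by+L_{\bg_1}L_{\bf}^{r-1}\by\,\bk_1$, the $\bk_2$-term vanishing by $L_{\bg_2}L_{\bf}^{r-1}\by\equiv\mathbf{0}$ from \eqref{eq: Lg_2h = 0 condition}; by the definition \eqref{eq:tracking-error} of the tracking error this is exactly $\bv=\hat{\bk}(\vec{\by}(\bx))+\be(\bx)$. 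Consequently $\dot h=\pdv{h_0}{\vec{\by}}[\mb{A}\vec{\by}+\mb{B}\hat{\bk}(\vec{\by})]+\pdv{h_0}{\vec{\by}}\mb{B}\,\be(\bx)-\tfrac{1}{\mu}(L_{\bf_1}V+L_{\bg_2}V\bk_2)$. Bounding the first term below with the ISSf-CBF inequality \eqref{eq:issf-cbf-linear}, absorbing the indefinite cross term against the robustness margin it supplies via Young's inequality $\pdv{h_0}{\vec{\by}}\mb{B}\,\be\geq-\tfrac{1}{\varepsilon}\Vert\pdv{h_0}{\vec{\by}}\mb{B}\Vert^2-\tfrac{\varepsilon}{4}\Vert\be\Vert^2$, and bounding the CLF term by $\tfrac{\lambda}{\mu}V(\bx)$, I obtain
\[
\dot h \;>\; -\gamma\,h_0(\vec{\by}(\bx))-\tfrac{\varepsilon}{4}\Vert\be(\bx)\Vert^2+\tfrac{\lambda}{\mu}V(\bx).
\]

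Substituting $h_0(\vec{\by}(\bx))=h(\bx)+\tfrac{1}{\mu}V(\bx)$ from \eqref{eq:our-cbf} and applying the CLF lower bound \eqref{eq:clf-lower-bound} in the form $\Vert\be(\bx)\Vert^2\leq\tfrac{1}{\beta}V(\bx)$ gives
\[
\dot h \;>\; -\gamma\,h(\bx)+\Big(\tfrac{\lambda-\gamma}{\mu}-\tfrac{\varepsilon}{4\beta}\Big)V(\bx)\;\geq\;-\gamma\,h(\bx),
\]
where the last step uses $V(\bx)\geq 0$ together with the parameter condition \eqref{eq:parameter_assumption}, which is precisely $\tfrac{\lambda-\gamma}{\mu}-\tfrac{\varepsilon}{4\beta}\geq 0$ (and in particular $\lambda\geq\gamma$). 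Hence $\sup_{\bu}[L_{\bf}h(\bx)+L_{\bg}h(\bx)\bu]\geq\dot h>-\gamma h(\bx)$ for every $\bx$, so $h$ is a CBF for \eqref{eq:multi-input-system}; by \cite{AmesTAC17} any Lipschitz controller satisfying \eqref{eq:cbf_constraint} for $h$ then renders $\mathcal{C}$ forward invariant and, since $\mathcal{C}\subset\mathcal{C}_{\by}$, keeps trajectories from $\mathcal{C}$ inside $\mathcal{C}_{\by}$. I expect the main obstacle to be the inequality bookkeeping in the last two displays: the Young split has to be tuned to the $\tfrac{1}{\varepsilon}\Vert\pdv{h_0}{\vec{\by}}\mb{B}\Vert^2$ term so that nothing but a multiple of $\Vert\be\Vert^2$, hence of $V$, survives, and \eqref{eq:parameter_assumption} is then exactly the condition making the residual $V$-term nonnegative. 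A secondary point is carrying the pointwise strict slack in \eqref{eq:issf-cbf-linear} through the infimum in \eqref{eq:k2}, which preserves the strict inequality in \eqref{eq:cbf_constraint} — it suffices to pick $\bk_2(\bx)$ meeting the CLF decrease up to an arbitrarily small slack absorbed by that margin.
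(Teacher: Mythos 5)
Your proposal is correct and follows essentially the same argument as the paper: evaluate the CBF condition at the input $(\bk_1(\bx),\bu_2)$ with $\bu_2$ from the tracking-CLF decrease, use the ISSf margin $\tfrac{1}{\varepsilon}\Vert\pdv{h_0}{\vec{\by}}\mb{B}\Vert^2$ to absorb the cross term (your Young split is exactly the paper's completing-the-square step), bound $\Vert\be\Vert^2$ by $V/\beta$, substitute $h_0=h+\tfrac{1}{\mu}V$, and invoke \eqref{eq:parameter_assumption}, with $\mathcal{C}\subset\mathcal{C}_{\by}$ following from $V\geq 0$. The only cosmetic difference is that the paper first bounds $\dot h_0$ and then assembles $\dot h=\dot h_0-\tfrac{1}{\mu}\dot V$, whereas you compute $\dot h$ directly; the substance is identical.
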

\begin{proof}
    Computing the time-derivative of $h_0$ and bounding (omitting dependencies on $\bx$ for brevity) we obtain:
    \begin{align}
        \dot{h}_0 = & \pdv{h_0}{\vec{\by}}(\vec{\by})\left[\mb{A}\vec{\by} + \mb{B}\bv \right] \label{eq:ref output linear dynamcis}\\
        = & \pdv{h_0}{\vec{\by}}(\vec{\by})\left[\mb{A}\vec{\by} + \mb{B}\hat{\bk}(\vec{\by}) \right] + \pdv{h_0}{\vec{\by}}(\vec{\by})\mb{B}\left[\bv - \hat{\bk}(\vec{\by}) \right] \nonumber \\
        > & -\gamma h_0(\vec{\by}) + \frac{1}{\varepsilon}\left\Vert \pdv{h_0}{\vec{\by}}(\vec{\by})\mb{B} \right\Vert^2 \label{eq: Issf liine} \\
        &- \left\Vert \pdv{h_0}{\vec{\by}}(\vec{\by})\mb{B} \right\Vert \left\Vert \bv - \hat{\bk}(\vec{\by})\right\Vert \nonumber \\
        \geq & -\gamma h_0(\vec{\by}) - \frac{\varepsilon}{4}\left\Vert \bv - \hat{\bk}(\vec{\by})\right\Vert^2 \label{eq: complete square}\\
        = & -\gamma h_0(\vec{\by}) - \frac{\varepsilon}{4}\left\Vert L_{\bf}^{r}\by + L_{\bg_1}L_{\bf}^{r-1}\by\bk_1 - \hat{\bk}(\vec{\by})\right\Vert^2 \label{eq: using v definition}\\
        \geq & -\gamma h_0(\vec{\by}) - \frac{\varepsilon}{4\beta}V 
        =  -\gamma h - \left(\frac{\gamma}{\mu} + \frac{\varepsilon}{4\beta}  \right)V. \label{eq; final h0 bound}
    \end{align}
    In the above expression, \eqref{eq:ref output linear dynamcis} follows directly from the linear output dynamics \eqref{eq: linear output dynamcis}. Next, \eqref{eq: Issf liine} is obtained by adding zero, assuming $\hat \bk$ enforces the ISSf-CBF inequality \eqref{eq:issf-cbf-linear} for \eqref{eq: linear output dynamcis}, and applying the Cauchy-Schwartz inequality\footnote{Given $|\ba^\top \bb| \leq \|\ba\|\|\bb\|$ for all $\ba,\bb \in \real^n,\| \cdot\| \coloneq \|\cdot\|_2.$ Setting $\bb \!=\! -\bc$ gives  $-\ba^\top \bc \leq |\!-\ba^\top \bc| \!\leq \!\|\ba\|\|\!-\!\bc\| \!=\! \|\ba\|\|\bc\| \! \! \implies \! \!\ba^\top \bc \geq -\|\ba\|\|\bc\|$.}. We then complete the square to achieve \eqref{eq: complete square}, and then use the definition of $\bv$ in \eqref{eq: y_r mapping} to rewrite \eqref{eq: complete square} as \eqref{eq: using v definition}. Next, we select $\bk_1$ provided in \eqref{eq:k1} and use \eqref{eq:clf-lower-bound} to bound \eqref{eq: using v definition} 
    using $V$. Finally, we use \eqref{eq:our-cbf} to express $h_0$ in terms of $h$ and $V$ to yield \eqref{eq; final h0 bound}.
    
    Since $V$ is a tracking CLF for \eqref{eq:partial-closed-loop-system}, then, for each $\bx\in\R^n$ there exists an input $\bu_2\in\R^{m_2}$ satisfying:
    \begin{equation}\label{eq:tracking-clf-proof}
        L_{\bf_1}V(\bx) + L_{\bg_2}V(\bx)\bu_2 \leq -\lambda V(\bx).
    \end{equation}
    Now, computing the time derivative of $h$ with $\bu_1=\bk_1(\bx)$ from \eqref{eq:k1} and bounding at each $\bx\in\R^n$ using the above expression, we obtain:
    \begin{align}
        \dot{h} &=  \dot{h}_0 - \frac{1}{\mu}\dot{V} =  \dot{h}_0 - \frac{1}{\mu}\pdv{V}{\bx}\left[\bf + \bg_1\bk_1 + \bg_2\bu_2 \right] \label{eq: CLF with k1 and k2}\\
        &=  \dot{h}_0 - \frac{1}{\mu}\pdv{V}{\bx}\left[\bf_1 + \bg_2\bu_2 \right] 
        \geq  \dot{h}_0 + \frac{\lambda}{\mu}V \label{eq: full closed loop bound}\\
       & >  -\gamma h + \frac{1}{\mu}\left(\lambda - \gamma - \frac{\varepsilon\mu}{4\beta} \right)V \label{eq: subs dot h0} 
        \geq  -\gamma h, 
    \end{align}
    where we used the partial closed-loop dynamics \eqref{eq:partial-closed-loop-system} to rewrite $\dot{h}$ in  \eqref{eq: CLF with k1 and k2}. 
    We then select $\bu_2$ that satisfies \eqref{eq:tracking-clf-proof} to obtain the bound in \eqref{eq: full closed loop bound}.
    We then substitute the bound obtained in \eqref{eq; final h0 bound} for $\dot h_0$ to obtain the first bound in \eqref{eq: subs dot h0}. Finally, applying the inequality \eqref{eq:parameter_assumption} for $\lambda$ yields the second bound in \eqref{eq: subs dot h0}. Given that this choice of $\bu$ guarantees $\dot h (\bx, \bu) > -\gamma(h(\bx))$ for all $\bx \in \real^n$, $h$ is a valid CBF\footnote{Implicit in the fact that $h$ satisfies \eqref{eq:cbf_constraint} with a strict inequality is that $\pdv{h}{\bx}(\bx)\neq\bm{0}$ for all $\bx\in\partial\mathcal{C}$, a regularity condition needed to apply standard CBF results regarding forward invariance \cite{AmesTAC17}.} for \eqref{eq:multi-input-system}. \textcolor{black}{Furthermore, since $h$ is a CBF for \eqref{eq:multi-input-system}, any Lipschitz continuous controller that satisfies \eqref{eq:cbf_constraint} renders $\mathcal{C}$ safe \cite[Cor. 2]{AmesTAC17}, and since $V(\mb{x}) \geq 0$, the safe set $\mathcal{C}$ is contained in the desired safe set $\mathcal{C}_{\by}$, $\mathcal{C}\subset \mathcal{C}_{\by}$, so trajectories that are safe with respect to $\mathcal{C}$ also remain in the desired safe set $\mathcal{C}_\by$. }
\end{proof}

\begin{remark}
    Theorem \ref{thm:main} and its proof are inspired by results on CBFs and reduced-order models \cite{TamasRAL22,CohenARC24} in which CBFs for simple models are combined with Lyapunov functions certifying tracking of such models to establish safety of the overall system. Our CBF construction may also be seen through this lens: CBFs designed for a chain of integrators \eqref{eq: linear output dynamcis} are transferred to a high-dimensional nonlinear system with dual relative degree using Lyapunov-based techniques. Importantly, as demonstrated in the following sections, the conditions of Theorem \ref{thm:main} are shown to hold for relevant systems, such as quadrotors, allowing for systematic synthesis of safety-critical controllers for highly dynamic systems based on CBFs designed for a chain of integrators.
\end{remark}

The preceding result requires the existence of a global CLF, that is, Def \ref{def:tracking-clf} is presented for all $\bx \in \re^n$. Due to various factors (e.g., topological obstructions to continuous stabilization \cite[Ch. 4]{LiberzonSwitching}), such a CLF may not exist for a given system of interest (e.g., that with states evolving on a differentiable manifold), and \eqref{eq:k2} may only hold on a set $\mathcal{D}\subset\R^n$. We relax the global condition on \eqref{eq:k2}, and show below that, while global stabilization in such a situation may not be possible, enforcing safety globally is still possible.

%

\begin{corollary}(Global Safety)\label{corollary: main}
Let the conditions of Theorem \ref{thm:main} hold, but suppose that \eqref{eq:k2} only holds on a set $\mathcal{D}\subset\R^n$. Define $\mathcal{E}\coloneqq\mathbb{R}^n\setminus\mathcal{D}$. Provided that for all $\bx\in\mathcal{E}$:
\begin{equation}\label{eq: corollary condition}
\begin{aligned}
    \left[L_{\bg_1}h_0(\vec{\by}(\bx)) = \frac{1}{\mu}L_{\bg_1}V(\bx) \right] \\
    \implies \left[L_{\bf}h_0(\vec{\by}(\bx)) - \frac{1}{\mu}L_{\bf}V(\bx) \geq -\gamma h(\bx) \right],
\end{aligned}
\end{equation}
then $h$ is a CBF for \eqref{eq:multi-input-system}.

\end{corollary}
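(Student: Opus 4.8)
The plan is to verify the CBF inequality \eqref{eq:cbf_constraint} for $h$ at every point of $\R^n$, partitioning the state space into $\mathcal{D}$, where the tracking CLF condition \eqref{eq:k2} holds, and $\mathcal{E}=\R^n\setminus\mathcal{D}$. On $\mathcal{D}$ there is nothing new to do: the proof of Theorem \ref{thm:main} only ever invokes \eqref{eq:k2} through the existence, at the particular point $\bx$ under consideration, of a $\bu_2$ satisfying \eqref{eq:tracking-clf-proof}. Hence for each $\bx\in\mathcal{D}$ one again takes $\bu_1=\bk_1(\bx)$ from \eqref{eq:k1} together with such a $\bu_2$, and the chain of bounds in the proof of Theorem \ref{thm:main} delivers $\dot h(\bx,\bu) > -\gamma h(\bx)$ verbatim; so \eqref{eq:cbf_constraint} holds on $\mathcal{D}$.

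The substance is on $\mathcal{E}$, where $V$ no longer provides control authority. First I would record the structural consequence of the dual relative degree: since $\vec{\by}(\bx)$ is assembled from $\by,L_{\bf}\by,\dots,L_{\bf}^{r-1}\by$, conditions \eqref{eq:zero-lie-derivatives} and \eqref{eq: Lg_2h = 0 condition} give $\pdv{\vec{\by}}{\bx}(\bx)\bg_2(\bx)\equiv\mathbf{0}$ (row block by row block), hence $L_{\bg_2}h_0(\vec{\by}(\bx))\equiv 0$, so $L_{\bg_2}h(\bx)=-\tfrac{1}{\mu}L_{\bg_2}V(\bx)$ while $L_{\bg_1}h(\bx)=L_{\bg_1}h_0(\vec{\by}(\bx))-\tfrac{1}{\mu}L_{\bg_1}V(\bx)$. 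Now fix $\bx\in\mathcal{E}$ and split on whether $L_{\bg}h(\bx)=\mathbf{0}$. If $L_{\bg}h(\bx)\neq\mathbf{0}$ — equivalently $L_{\bg_1}h_0(\vec{\by}(\bx))\neq\tfrac{1}{\mu}L_{\bg_1}V(\bx)$ or $L_{\bg_2}V(\bx)\neq\mathbf{0}$ — the supremum in \eqref{eq:cbf_constraint} is $+\infty$ (scale $\bu$ along $L_{\bg}h(\bx)^\top$), so the inequality is immediate. If $L_{\bg}h(\bx)=\mathbf{0}$, then in particular $L_{\bg_1}h_0(\vec{\by}(\bx))=\tfrac{1}{\mu}L_{\bg_1}V(\bx)$, which is exactly the antecedent of \eqref{eq: corollary condition}; its conclusion gives $L_{\bf}h(\bx)=L_{\bf}h_0(\vec{\by}(\bx))-\tfrac{1}{\mu}L_{\bf}V(\bx)\geq-\gamma h(\bx)$, and since in this degenerate case the supremum in \eqref{eq:cbf_constraint} equals $L_{\bf}h(\bx)$, the barrier condition holds at $\bx$. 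Together with the $\mathcal{D}$ case this covers all of $\R^n$, so $h$ is a CBF for \eqref{eq:multi-input-system}.

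The main obstacle I expect is reconciling the strict inequality demanded by Definition \ref{def:cbf}/\eqref{eq:cbf_constraint} with the non-strict conclusion $\geq$ that \eqref{eq: corollary condition} supplies on the ``doubly degenerate'' subset of $\mathcal{E}$ where $L_{\bg}h(\bx)=\mathbf{0}$; a fully rigorous treatment would either strengthen \eqref{eq: corollary condition} to a strict inequality, or argue that this subset does not meet $\partial\mathcal{C}$ in a way that breaks the regularity condition $\pdv{h}{\bx}(\bx)\neq\mathbf{0}$ needed for the forward-invariance results of \cite{AmesTAC17}, or else work with the relaxed (Nagumo-type) barrier inequality. A secondary, routine point is the careful verification of $\pdv{\vec{\by}}{\bx}\bg_2\equiv\mathbf{0}$ from \eqref{eq:zero-lie-derivatives}–\eqref{eq: Lg_2h = 0 condition} — this is precisely where the dual-relative-degree hypothesis does its work, and it is the same computation implicit in the $\dot h_0$ derivation of Theorem \ref{thm:main}.
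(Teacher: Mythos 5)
Your proposal is correct and follows essentially the same route as the paper: split into $\bx\in\mathcal{D}$ (where Theorem \ref{thm:main} applies pointwise) and $\bx\in\mathcal{E}$ (where $L_{\bg_2}h_0\equiv 0$ from the dual relative degree structure reduces everything to the dichotomy ``$L_{\bg}h(\bx)\neq\mathbf{0}$, so the supremum in \eqref{eq:cbf_constraint} is unbounded'' versus ``$L_{\bg}h(\bx)=\mathbf{0}$, so \eqref{eq: corollary condition} supplies the drift bound''), which is exactly the paper's two-case argument via \cite{jankovic2018robust}. The strict-versus-nonstrict issue you flag is real but is equally present in the paper's own proof, which likewise concludes only $L_{\bf}h(\bx)\geq-\gamma h(\bx)$ in the degenerate case, so it is not a defect of your argument relative to the paper's.
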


\begin{proof} We divide the proof by considering two cases:
\\
\noindent
\textbf{Case 1:} If $\bx \in\Dc$, then
\eqref{eq:k2} holds and Theorem \ref{thm:main} implies that \eqref{eq:cbf_constraint} holds for all $\bx\in\Dc$.
\\
\noindent
\textbf{Case 2:} If $\bx \in \mathcal{E}$, then \eqref{eq:k2} does not hold and we must have $L_{\bg_2}V(\bx)=\mathbf{0}$ (\ie if $L_{\bg_2}V(\bx)\neq\mathbf{0}$ at this point, then there always exists an input $\bu_2$ satisfying \eqref{eq:k2}).
Taking the time derivative of $h$ at $\bx\in\mathcal{E}$ yields:
  \begin{align}
        \dot{h} &= L_\bf h_0 - \frac{1}{\mu} L_\bf V +  \left(L_{\bg_1} h_0 - \frac{1}{\mu}L_{\bg_1} V\right)\bu_1 - \frac{1}{\mu}L_{\bg_2} V\bu_2 \nonumber \\
        &= \underbrace{L_\bf h_0 - \frac{1}{\mu} L_\bf V}_{L_{\bf}h(\bx)} +  \underbrace{\left(L_{\bg_1} h_0 - \frac{1}{\mu}L_{\bg_1} V\right)}_{L_{\bg_1}h(\bx)}\bu_1,
\end{align}
for all $\bx\in\mathcal{E}$,
where $\bg(\bx)=\begin{bmatrix}
    \bg_1(\bx) & \bg_2(\bx)
\end{bmatrix}$. Provided that \eqref{eq: corollary condition} holds, we have
$L_{\bg}h(\bx)=\mathbf{0}$ implies that $L_{\bf}h(\bx)\geq - \gamma h(\bx)$ for all $\bx\in\mathcal{E}$, which implies that \eqref{eq:cbf_constraint} holds for all $\bx\in\mathcal{E}$ (cf. \cite{jankovic2018robust}). Combining Cases 1 and 2, we have that \eqref{eq:cbf_constraint} holds for all $\bx\in\R^n$, implying that $h$ is a CBF for \eqref{eq:multi-input-system}.
\end{proof}
Given $h$ in \eqref{eq:our-cbf}, Theorem \ref{thm:main} and the above Corollary allow for synthesizing an optimization-based controller as in \eqref{eq:cbfqp} for any given $\gamma \in \real_{>0}$ and nominal controller $\bk_\nom$.
The following sections provide case studies of systems that satisfy the conditions of Theorem \ref{thm:main} and Corollary
\ref{corollary: main}.

\section{Case Study: Unicycle with Drift}\label{sec:unicycle}
We begin by considering the unicycle system with drift:
    \begin{equation}\label{eq: unicycle system w drift}
        \frac{\mathrm{d}}{\dt}\begin{bmatrix}
            x \\ y \\ \theta
        \end{bmatrix}
        =
  \underbrace{\begin{bmatrix}
            d_x \\ d_y \\ 0
        \end{bmatrix}}_{{\bf}(\bx)} 
        + \underbrace{\begin{bmatrix}
            \cos(\theta) \\ \sin(\theta) \\ 0
        \end{bmatrix}}_{\bg_1(\bx)}
        v
        +
        \underbrace{\begin{bmatrix}
            0 \\ 0 \\ 1
     \end{bmatrix}}_{\bg_2(\bx)}\omega,
    \end{equation}
where the state $\bx = (x, y, \theta) \in \Xc = \real^2 \times \mathbb{S}^1$ defines the planar position and heading angle. The control input $\bu= (\bu_1, \bu_2) = (v,\omega)  \in \real^2$ represents the linear and angular velocities. The values $d_x, d_y \in \real$ represent constant drift, motivated by the unicycle operating on a treadmill (e.g. Fig. \ref{fig:unicycle ellipse}). Our control objective is to constrain the position of the unicycle. Thus, we take our outputs as $\by(\bx) = (x, y) \in \real^2$, which do not have a valid relative degree in the sense of Def. \ref{def:relative-degree}. However, the unicycle with this choice of outputs has dual relative degree $(r,q)=(1,1)$ as one may verify that $L_{\bg_1}\by(\bx) = [\cos(\theta)\;  \sin(\theta)]^\top$ and $L_{\bg_2}L_{\bg_1}\by(\bx)=[-\sin(\theta)\;\cos(\theta)]^\top$, which both have rank 1.

\subsection{Safety Specification}
    We consider a safety requirement that ensures the unicycle remains within an ellipse centered at $\by_\mathrm{c} = [ x_c, y_c]^\top \in \real^2$:
    \begin{align}\label{eq: ellipse CBF}
        h_0(\vec{\by}(\bx)) = 1 - (\by(\bx) - \by_\mathrm{c})^\top P(\by(\bx) - \by_\mathrm{c}),
    \end{align}
    where $P  = \mathrm{diag}(p_1, p_2) \in  \real^{2\times 2}$ is a diagonal matrix and $p_1, p_2 \in \real_{>0}$ are the weights corresponding to the lengths of the major and minor axes of the ellipse. The output coordinates $\vec{\by}(\bx) = \by(\bx)$ yield a single-integrator system of the form \eqref{eq: linear output dynamcis}.
    We design a differentiable controller  $\hat{\bk} \coloneqq [\hat{\bk}_x, \hat{\bk}_y]^\top:\real^2 \rightarrow \real^2$
    satisfying \eqref{eq:issf-cbf-linear} for the single integrator using the methods in \cite{CohenLCSS23}. 
    We then leverage the single integrator controller $\hat{\bk}$ to generate a safe linear velocity $v = \bk_1(\bx)$ as in \eqref{eq:k1} for the unicycle. 

\subsection{Safety-Critical Control}
Let $\tilde \bk (\bx)  \coloneqq  \hat \bk (\vec{\by})- L_{\bf}\by = [\hat \bk_x(\vec{\by}) - d_x, \hat \bk_y(\vec{\by}) - d_y]^\top$. We consider the tracking CLF, slightly modified from \cite{lee_geometric_2010}:
\begin{align} \label{eq: our-tracking clf explicit}
    V(\bx) &= \frac{\|\tilde \bk (\bx) \|^2}{2}\mathrm{tr}\left( \mathbf{I}_{2 \times 2} - \mathbf{R}(\theta_\des(\bx))^\top \mathbf{R}(\theta)\right),
\end{align}
  where for $\|\tilde \bk (\bx)\| \neq 0$, the direction of the vector $\tilde \bk(\bx)$ provides the desired safe heading angle (\ie safe yaw) as $\theta_\des(\bx) = \mathrm{atan2}(\hat \bk_y(\vec{\by}(\bx)) - d_y, \hat \bk_x(\vec{\by}(\bx)) - d_x)$,  while if $\|\tilde \bk (\bx)\| = 0 $, then $V(\bx) = 0$, making $\theta_\des(\bx)$ a free parameter that may be assigned arbitrarily.
  The term $\mathbf{R} \in SO(2)$ is a 2D rotation matrix, so it follows that  \eqref{eq: our-tracking clf explicit} yields:
  \begin{align}
      V(\bx) = \|\tilde \bk (\bx) \|^2 (1 - \cos(\theta - \theta_\des(\bx))),
  \end{align} which satisfies \eqref{eq:clf-lower-bound} 
  as shown in the appendix.
%
%
%
%

We now consider the DRD-CBF as in \eqref{eq:our-cbf}, which we  show is a CBF for \eqref{eq: unicycle system w drift}. We first find that $L_{\mb{g}_2}V(\mb{x}) = 0 \implies \theta \in \{ \theta_\textup{des}(\mb{x}), \theta_\textup{des}(\mb{x}) + \pi \} $. 
 Let $\mu = 0.06$ and $\hat \bk(\vec \by(\bx)) = -\rho P^{\frac{1}{2}}\by(\bx)$ with $\rho = 0.16$, then 
 $L_{\mb{g}_1}h(\mb{x}) \neq 0$ when $\theta - \theta_\textup{des}(\mb{x}) = \pi$ for all $\bx \in \Cc_\by$ defined in \eqref{eq:C-y}. Thus, Collorary \ref{corollary: main} applies. 
Note that this does not imply global stability of $\theta$ on $\mathbb{S}^1$ with a continuous controller, but that there exists inputs for each $\bx \in \Cc_\by$ satisfying the CBF condition \eqref{eq:cbf_constraint}, ensuring safety but not necessarily stability of $\theta=\theta_{\textup{des}}(\bx)$.

\begin{figure}
    \centering  \includegraphics[width=1.0\linewidth]{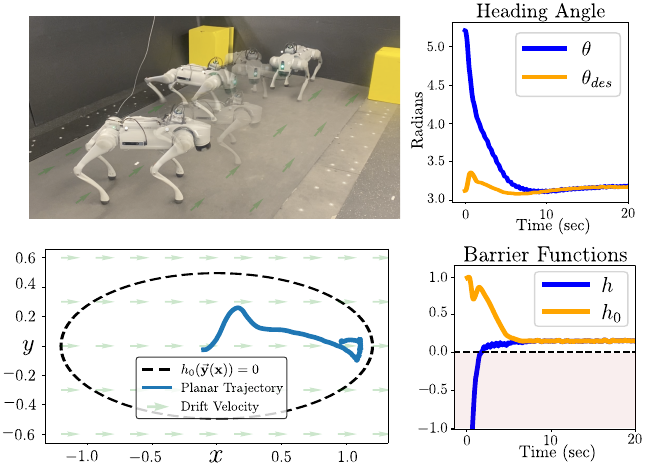}
    \vspace{-4mm}
    \caption{\textbf{(top left)} The quadrupedal robot 
    \textbf{(top right)} The yaw, $\theta$, of the quadruped in blue and the desired yaw, $\theta_\textup{des}$, from the desired safe controller for the linear system $\hat{\mb{k}}(\vec{\mb{y}})$ in orange. 
    \textbf{(bottom left)} ($x, y$) trajectories of the robot in blue with the drift velocity shown using green arrows and the boundary of $\Cc_\by$ \eqref{eq:C-y} shown as a black dotted line. Notably, the trajectories stay in this set and satisfy our safety criterion as desired.  
    \textbf{(bottom right)} Our DRD-CBF $h$ in blue and the safety criterion $h_0$ in orange. Notably, $h_0$ remains above zero.
    The robot is initialized with an unsafe yaw $\theta$, causing $h$ to be initially negative (\ie outside the safe set $\Cc$ \eqref{eq:safe-set}). We demonstrate that the geometric tracking CLF \eqref{eq: our-tracking clf explicit} incorporated in $h$ leads to the convergence of $\theta$ to a safe yaw $\theta_\textup{des}$ yielding a positive $h$, enforcing attraction to $\Cc$.}.
\label{fig:unicycle ellipse}
    \vspace{-6mm}
\end{figure}
\subsection{Unicycle: Simulation and Hardware}
\label{subsec:unicycle_experiments}%
We demonstrate the effectiveness of our proposed CBF \eqref{eq:our-cbf} in ensuring safety for system \eqref{eq: unicycle system w drift} in simulation and hardware. Using the safety specification \eqref{eq: ellipse CBF}, we synthesize a safe controller as in \eqref{eq:cbfqp} with the $h$ defined in \eqref{eq:our-cbf} for \eqref{eq: unicycle system w drift} with drift terms $d_x = 0.35 \ m/s$ and $d_y= 0$.
For the hardware demonstration, we apply this controller to a Unitree GO2 quadruped for which the unicycle may serve as a ROM\footnote{The velocity commands generated by our controller are then tracked by Unitree's onboard velocity tracking controller. In general, such an approach will lead to ISSf of the closed-loop system as analyzed in \cite{TamasRAL22,CohenARC24}.}. On hardware, the drift terms are captured by placing the quadruped on a treadmill moving at a constant velocity of $0.35\,m/s$.
The simulated and real-world trajectories can be seen ensuring safety in Fig.\, \ref{fig:unicycle ellipse} with a nominal controller of zero linear and angular velocity.

To further illustrate our approach, we design a safe controller for \eqref{eq: unicycle system w drift} with no drift using $h_0 = h_\textup{obs}$ as in
\eqref{eq: planar drone safety specification} 
to avoid an obstacle with the nominal tracking controller $k_\nom(\bx) = [K_\mathrm{p}\|\by(\bx) - \by_\des\|, -K_\mathrm{q} \sin(\theta - \theta_\des(\bx))]^\top $ where $\by_\des \in \real^2$ is a goal position and $K_\mathrm{p}, K_\mathrm{q} \in \real_{>0}$ are gains. This $h_0$ is used to construct a DRD-CBF $h$ as in the previous subsection with $\hat{\bk}$
    satisfying \eqref{eq:issf-cbf-linear}, which satisfies the conditions of Theorem \ref{thm:main} and Corollary \ref{corollary: main}. Applying \eqref{eq:cbfqp} to this system using the corresponding $h$ produces the results in Fig.\,\ref{fig:quadruped}, which shows this safe trajectory alongside the unfiltered $k_\nom$ trajectory, which violates safety. 
\begin{figure}
    \centering
    \vspace{-4mm}
    \includegraphics[width=0.9\linewidth]{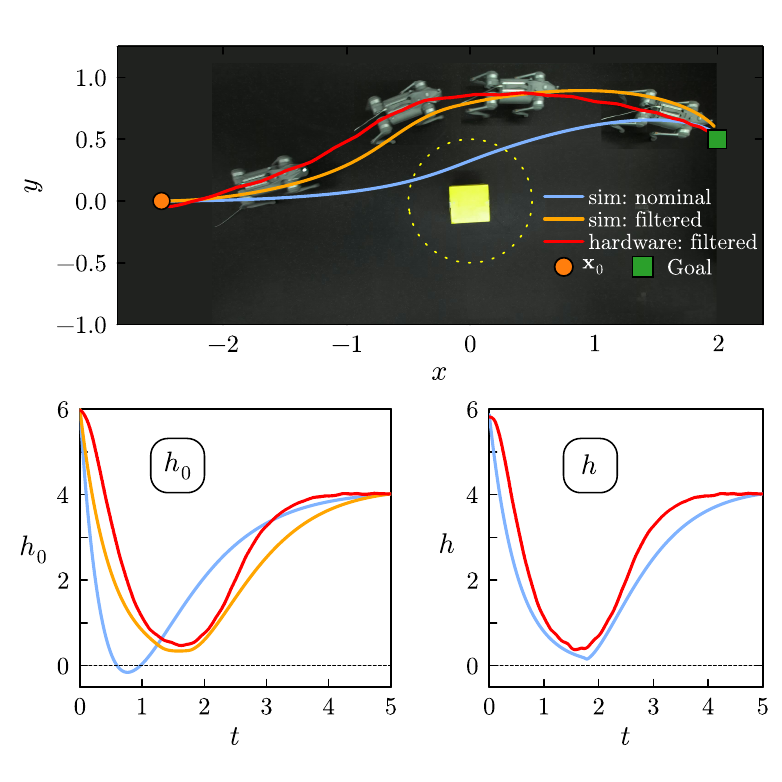}
    \vspace{-6mm}
    \caption{Hardware (red) and simulated (orange and blue) unicycle trajectories. \textbf{(top)} A composite image showing the position of the quadruped throughout the experiment alongside position trajectories for the obstacle avoidance safe controller designed in Section \ref{subsec:unicycle_experiments} on hardware (red) and in simulation, with the unfiltered $\mb{k}_\mathrm{nom}$. \textbf{(bottom left)} The value of output safety requirement $h_0$ throughout the experiment. \textbf{(bottom right)} The value of the DRD-CBF $h$ throughout the experiments.}
    \label{fig:quadruped}
    \vspace{-5mm}
\end{figure}
\section{Case Study: Quadrotor}\label{sec:quadrotor}
We now consider the quadrotor system, as discussed in \cite{lee_geometric_2010}, with the dynamics in the form  \eqref{eq:multi-input-system} given by: 
\begin{align}
      \! \! \! \frac{\mathrm{d}}{\dt}\! 
    \begin{bmatrix}
        \mb{y}\\
        \dot{\mb{y}}\\
        \mb{q}\\ 
        \bs{\omega}
    \end{bmatrix} 
    \! \! = \! \! 
    \underbrace{\begin{bmatrix}
        \dot{\mb{y}}\\
        - g \mb{e}_z\\
        \frac{1}{2}\mb{q} \otimes \bs{\omega}_\mb{q} \\
        \mb{0}
    \end{bmatrix}}_{\bf(\bx)}
    \!\! + \!\!
    \underbrace{\begin{bmatrix}
        \mb{0} \\ 
        \frac{1}{m}\mb{R}(\mb{q}) \mb{e}_z\\
        \mb{0} \\
        - J^{-1}\bs{\omega} \times J \bs{\omega} 
    \end{bmatrix}}_{\bg_1(\bx)} \! \tau 
    + \!
    \underbrace{\begin{bmatrix}
        \mb{0}\\
        \mb{0}\\
        \mb{0}\\
        J^{-1}
    \end{bmatrix}}_{\bg_2(\bx)} \! \!\mb{M} \label{eq:3D_drone}
\end{align}
where $\mb{y} = (x, y,z) \in \R^3$ is the three dimensional position of the center of mass, $\mb{q} \in \mathbb{S}^3$ is the quaternion representing the orientation, $\bs{\omega}$ is the angular velocity in the body-fixed frame, and 
$\bs{\omega}_\mb{q} = (0, \bs{\omega})$ is the pure quaternion representation of $\bs{\omega}$, $\mb{J} \in \R^{3 \times 3} $ is the inertia matrix with respect to the body-fixed frame, $m\in \R_{>0}$ is the total mass, $ \tau \in \R$ is the total thrust, $\mathbf{M} \in \R^3$ is the total moment in the body-fixed frame, $g \in \R_{>0}$ is gravity, and $\mb{e}_z$ is the inertial frame $z$-direction. Additionally, $\mb{R} \in SO(3)$ is the rotation matrix which can be derived from quaternion $\bq$.
%
The outputs $\by=(x,y,z)$ 
do not have a relative degree. Similar to the previous example, this system has dual relative degree $(r,q)=(2,2)$ as both $L_{\bg_1}L_\bf \by(\bx) = \mathbf{R}(\bq)\mb{e}_z$ and $L_{\bg_2}L_{\bf}L_{\bg_1}L_{\bf}\by(\bx)$ have full rank. 
Fixing the $y$ position, and assuming  unit inertial values, \eqref{eq:3D_drone} reduces to the planar model of the quadrotor, where:
    \begin{subequations}\label{eq: planar drone ex}
      \begin{align}
        \bf(\bx) &=
        \begin{bmatrix}
            \dot{x}, 
            \dot{z}, 
            \omega,  
            0,  
            -g,  
            0
        \end{bmatrix}^\top
        \\
        \bg_{1}(\bx)& =
        \begin{bmatrix}
            0 , 
            0,
            0,
            -\sin(\theta),
            \cos(\theta),
            0 
        \end{bmatrix}^\top
        \\
        \bg_{2}(\bx) &=
        \begin{bmatrix}
             0,
             0,
             0,
             0,
             0, 
             1
        \end{bmatrix}^\top
        \end{align}
    \end{subequations}
with the state $\bx=(x,z,\theta,\dot{x},\dot{z},\omega) \in \Xc = \real^2 \times \mathbb{S}^1 \times \real^3$ consisting of the quadrotor's horizontal position $x$, vertical position $z$, its orientation with respect to the horizontal plane $\theta$, and their corresponding rates of change. The input to the system is $\bu= (\bu_1, \bu_2) =(\tau,\mathbf{M}) \in \real^2$ where $\tau$ denotes the total thrust and $\mathbf{M}$ denotes the total moment. Similar to the previous case study, our goal is to constrain the planar position of the quadrotor, captured by the outputs $\by(\bx) = (x,z) \in \real^2$, which leads to a system with dual relative degree $(r,q)=(2,2)$, where $L_{\bg_{1}}L_{\bf}\by(\bx) = [-\sin(\theta), \cos(\theta)]^\top$ and $L_{\bg_2}L_{\bf}L_{\bg_{1}}L_{\bf}\by(\bx) = [\cos(\theta), \sin(\theta)]^\top$.
    \subsection{Planar Quadrotor: Safety Specification}
    To avoid an obstacle at a position $\by_\mathrm{obs} = [x_\mathrm{obs}, y_\mathrm{obs}] \in \real^2$ with radius $r_\mathrm{o} \in \real_{> 0}$, we consider the safety constraint:
    \begin{align}\label{eq: planar drone safety specification}
    h_\mathrm{obs}(\by(\bx)) = \|\by(\bx) - \by_\mathrm{obs} \|^2 - r_\mathrm{o}^2.
    \end{align}
    As this system has dual relative degree $(r,q)=(2,2)$, the output coordinates are given by $\vec{\by}(\bx) = (\by(\bx), \dot \by(\bx))=(x,z,\dot{x},\dot{z})$ and yield double integrator dynamics of the form \eqref{eq: linear output dynamcis} for which techniques such as \cite{AndrewCDC22,MurrayACC20,WeiTAC22} may be employed to construct a CBF from $h_\textup{obs}$. 
   We construct a CBF $h_0: \real^4 \rightarrow \real$ for a double-integrator using \cite{AndrewCDC22}
    and design a controller  $\hat \bk \coloneqq [\hat{\bk}_x, \hat{\bk}_y]^\top:\real^4\rightarrow \real^2$ satisfying \eqref{eq:issf-cbf-linear}
    that generates a safe thrust $\tau =  \bk_1(\bx)$ as in \eqref{eq:k1} for the quadrotor.

\begin{figure}[!t]
\centering
\includegraphics[width=.9\linewidth]{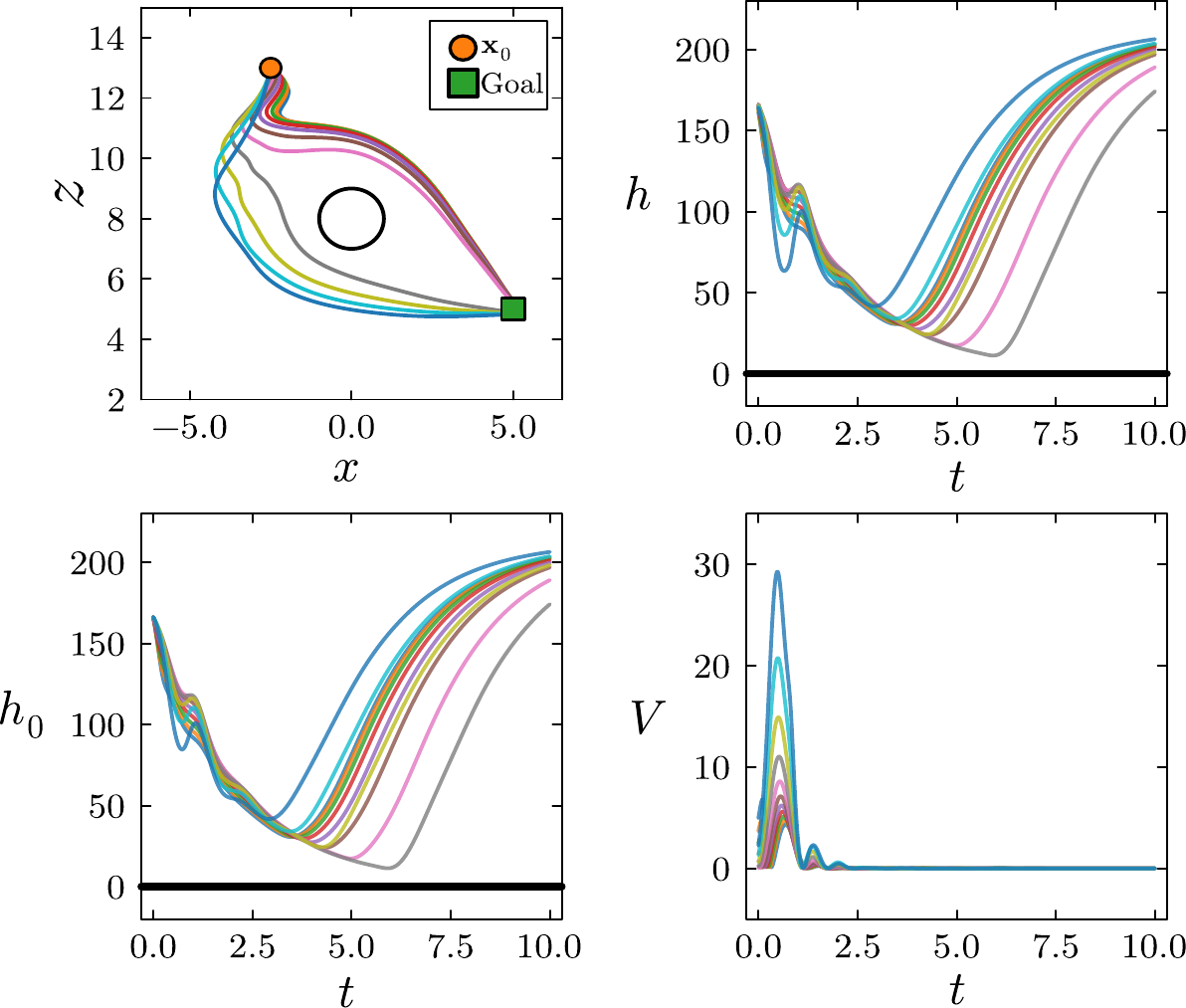}
\vspace{-3mm}
\caption{2D planar quadrotor simulation. \textbf{(top left)} Obstacle avoidance trajectories for various initial heading angles in the range $\left( 0, \frac{3 \pi}{4} \right)$. \textbf{(bottom left)} The value of $h_0$, derived from (\ref{eq: planar drone safety specification}), which stays positive for all initial headings. \textbf{(top right)} The value of the DRD-CBF (\ref{eq:our-cbf}). \textbf{(bottom right)} The value of the attitude Lyapunov function $V$ which is used to penalize $h_0$.}
\label{fig:obstacle_avoid planar drone}
\vspace{-6mm}
\end{figure}

\subsection{Planar Quadrotor: Safety-Critical Control}
As in the previous case study, when $\|\tilde \bk (\bx)\| = \|\hat{\bk}(\vec{\by}(\bx)) - L_{\bf}^2\by(\bx))\| \neq 0$, we can define a desired safe heading angle as $\theta_\des(\bx)=  \mathrm{atan2}(\hat \bk_y(\vec{\by}(\bx)) + g, \hat \bk_x(\vec{\by}(\bx)))$ and consider a function as defined in \eqref{eq: our-tracking clf explicit},  denoted by $V_0$, whose derivative is of the form:
 \begin{align}
 \dot V_0 (\bx) = \frac{\partial V_0}{\partial \vec{\by}}(\bx)\dot{\vec{\by}} + \frac{\partial V_0}{\partial \theta}(\bx) \omega.
 \end{align}
 However, since the quadrotor model \eqref{eq: planar drone ex} is a second-order system, we must design a moment $\mathbf{M}$ that drives $\omega $ to a desired state that ensures $\theta$ can be stabilized to $\theta_\des(\bx)$.
To achieve this, we construct a CLF using backstepping \cite{Krstic} as: 
\begin{align}\label{eq: backstepping V}
V(\bx) &= V_0(\bx) + \frac{1}{2 \mu_2}\|\omega   - \bk_\omega(\bx)\|^2,
\end{align}
with $\mu_2 >0$, such that $\bk_\omega:\real^n \rightarrow \real$ satisfies:
\begin{align}\label{eq: k_w}
    \frac{\partial V_0}{\partial \vec{\by}}(\bx)\dot{\vec{\by}} + \frac{\partial V_0}{\partial \theta}(\bx) \bk_\omega(\bx)  \leq -\lambda V_0(\bx).
\end{align}
%
%
One can verify that \eqref{eq: backstepping V} satisfies \eqref{eq:k2} on a set $\mathcal{X}$ for \eqref{eq: planar drone ex}.
We consider the DRD-CBF as in \eqref{eq:our-cbf}, and
show that $h$ is a CBF for \eqref{eq: planar drone ex}.
Using similar analysis to the previous case study, we observe that $L_{\bg{_2}} V =0 
\implies \omega = \bk_\omega(\bx)$ which, from \eqref{eq: k_w}, implies $\dot V = L_{\bf_1} V \leq -\lambda V$, so the Lyapunov condition can be satisfied on all of $\mathcal{X}$. Again, note that we are not claiming that the Lyapunov condition \eqref{eq:k2} may be satisfied by a continuous controller, but that at each state, there exists inputs that render the system  \eqref{eq: planar drone ex} safe, which is sufficient to ensure that $h$ is a CBF as in Def.~\ref{def:cbf}.
\begin{figure}
    \centering
    \includegraphics[width=1\linewidth]{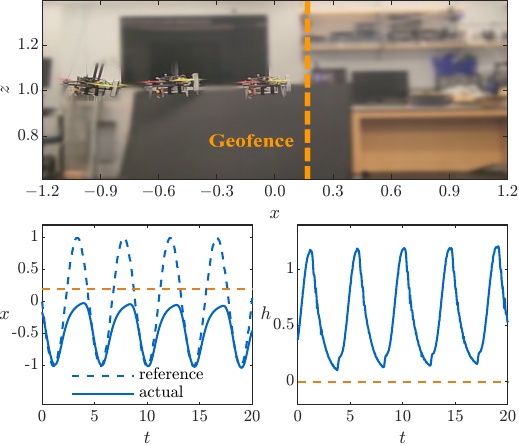}
    \vspace{-7mm}
    \caption{3D quadrotor demonstration. \textbf{(top)} A composite image showing the position of the quadrotor drone over the course of the geofencing experiment. \textbf{(bottom left)} The $x$-position reference, which passes beyond the geofence, and the actual $x$-position, which deviates from the reference to maintain safety. \textbf{(bottom right)} The value of the DRD-CBF (\ref{eq:our-cbf}), which stays positive throughout the flight, confirming that safety is maintained.}
    \label{fig:drone}
    \vspace{-5mm}
\end{figure}
\subsection{Planar Quadrotor: Simulation}
 We implement the above example in simulation with the safety-specification from \eqref{eq: planar drone safety specification} 
with a nominal tracking controller $\bk_\mathrm{nom}(\bx) = [\bk_1(\bx), K_\mathrm{p}(\theta_\des(\bx) - \theta) + K_\mathrm{q}(\dot{\theta}_\des(\bx) - \omega)]^\top$ where $K_\mathrm{p}, K_\mathrm{q} \in \real_{>0}$ are gains.  The resulting DRD-CBF is used to construct \eqref{eq:cbfqp} and
Fig. \ref{fig:obstacle_avoid planar drone} illustrates the trajectories of the planar quadrotor resulting from filtering $\bk_\mathrm{nom}$ with $h$ along with the satisfaction of safety and the tracking behavior encoded through the CLF.
\subsection{ 3D Quadrotor: Hardware}
Finally, we deploy our method on a quadrotor drone. 
We use an OptiTrack motion capture system to provide the drone with real-time position measurements and a VectorNav VN-200 IMU for attitude state estimation. All state estimation and control computations are performed onboard at 750 Hz using a Jetson Orin NX. The drone model used is a simplified version of the dynamics \eqref{eq:3D_drone} with thrust and desired angle rate inputs as in \cite{CosnerICRA24}. The desired angle rates are tracked by a Betaflight flight controller and ESC at 8 kHz.

To demonstrate the performance of our DRD-CBF \eqref{eq:our-cbf}, we command the quadrotor to track a sinusoidal reference $\by_\text{d}(t) = [-\sin{(0.4\pi t)}, \ 0.0, \ 1.0]^\top$. We then define an $x$-coordinate geofence as the 0-superlevel set of $h_\text{geo}(\vec{\by}(\bx)) = x_\text{geo} - x,$ where $x_\text{geo}\in\mathbb{R}$ is the $x$-position of the geofence \cite{drewgeofence}. For this particular experiment $x_\text{geo}=0.2\;\text{m}$. Using a high order CBF \cite{SreenathACC16,WeiTAC22}, we extend $h_\text{geo}$ to get $h_0$, a CBF for the quadrotor double-integrator translational dynamics. By enforcing forward invariance of the safe-set defined by $h_0(\vec{\by}(\bx)) \!\!\geq \!\! 0$, we ensure the $x$-coordinate of the quadrotor never exceeds the value of $x_\text{geo}$, irrespective of the commanded reference. 
Select data are presented in Fig.\,\ref{fig:drone} utilizing the DRD-CBF \eqref{eq:our-cbf} with the tracking CLF \eqref{eq: our-tracking clf explicit} for 3D rotation matrices (see appendix).


From Fig.\,\ref{fig:drone}a and Fig.\,\ref{fig:drone}b, it is clear that the quadrotor drone effectively tracks the sinusoidal reference as long as it stays inside the safe set. However, as the commanded position crosses the geofence, the safety filter intercedes, preventing the drone from violating its safety specification.

\section{Conclusion}
We presented a constructive framework for synthesizing CBFs and safety-critical controllers for nonlinear systems with dual relative degree, where outputs are used to specify safety requirements.
%
%
We design a CBF for an integrator chain with a Lyapunov function to certify tracking of the safe inputs generated by this system, to synthesize a CBF for the full nonlinear system.
We also provide case studies of systems of dual relative degree, for which we synthesize CBFs.
We further demonstrated the efficacy of the proposed method on hardware platforms that exhibit these properties. Future work involves studying the impact of uncertainties in high-fidelity scenarios in navigation for this class of systems.

\renewcommand{\baselinestretch}{1.08}
\bibliography{alias,PO,GB, main-GB, cosner,cohen}
\bibliographystyle{ieeetr}
\appendix
We show that the tracking CLF \eqref{eq: our-tracking clf explicit} satisfies \eqref{eq:clf-lower-bound} for 3D rotation matrices---the result holds for 2D rotation matrices as a special case and as such, holds for all case studies presented in this paper. 
First, define the \textit{error rotation matrix} $\mathbf{R}_\mathrm{e}$ as the rotation matrix which represents the attitude of the \textit{desired} frame with respect to the \textit{body-fixed} frame, i.e.,
\begin{equation}\mathbf{R}_\mathrm{e} = \mathbf{R}^\top\mathbf{R}_\mathrm{des}.
\end{equation}

This rotation matrix can always been represented using Euler angles, that is, it can be represented by a sequence of three (not necessarily unique) rotations about intrinsic orthogonal axes. In particular, we can choose the Z-X-Z rotation sequence, resulting in:
\begin{equation}
    \mathbf{R}_\mathrm{e} = \mathbf{R}_z(\psi)\mathbf{R}_x(\theta)\mathbf{R}_z(\varphi).
\label{eq: Euler}
\end{equation}

Given $\mathbf{R}_\mathrm{e}$, for any Euler angle rotation sequence, the resultant $\varphi$ and $\psi$ are \textit{not} guaranteed to be unique (specifically when the second Euler angle $\theta$ is defined such that it aligns the 1st and 3rd rotation axes). However, $\theta$ will always be unique.
Next, consider the geometric tracking CLF:
\begin{align}
    V(\bx) &= \frac{\|\tilde{\mathbf{k}}(\bx)\|^2}{2}\mathrm{tr}\left( \mathbf{I} - \mathbf{R}_\mathrm{e}\right),
\label{eq: Lyapunov}
\end{align}

\noindent which is a slight modification of the Lyapunov function in \cite{lee_geometric_2010} used to prove ``near" global asymptotic stability of the equilibrium point defined by $\mathbf{R}_\mathrm{e} = \mathbf{I}$. Futhermore, for the 3D quadrotor \eqref{eq:3D_drone}, we have $L_{\bg_1}L_\bf \by(\bx) = \mathbf{R}\mb{e}_z$, thus, following from \eqref{eq:tracking-error}, we can express $\Vert \mb{e}(\mb{x}) \Vert$ as:
\begin{align}
    \|\mb{e}(\bx)\| = \|(\mathbf{I}-\mathbf{R}\mathbf{e}_z\mathbf{e}_z^\top\mathbf{R}^\top)\tilde \bk(\bx)\|.
\end{align}
\noindent Since $\mathbf{b}_z = \mathbf{R}\mathbf{e}_z$ is the basis vector representing the body-fixed z-axis, this becomes:
\begin{align}
    \|\mb{e}(\bx)\| &= \|(\mathbf{I}-\mathbf{b}_z\mathbf{b}_z^\top)\tilde \bk(\bx)\|\\
    &= \|\tilde \bk(\bx)-\mathbf{b}_z(\mathbf{b}_z^\top\tilde \bk(\bx))\|,
\label{ortho}
\end{align}
where the vector $\tilde{\bk}(\bx)-\mathbf{b}_z(\mathbf{b}_z^\top\tilde \bk(\bx))$ in (\ref{ortho}) is the component of $\tilde \bk(\bx)$ orthogonal to $\mathbf{b}_z$. Since the angle between $\tilde \bk(\bx)$ and $\mathbf{b}_z$ is precisely the $\theta$ in (\ref{eq: Euler}), the magnitude of the projection can be expressed as:
\begin{align}\label{eq: error explicit}
    \|\mb{e}(\bx)\| = \|\tilde \bk(\bx) \sin(\theta)\|.
\end{align}
Using this simplified expression for $\Vert \mb{e}(\mb{x}) \Vert$, we now present the following Lemma.

\begin{lemma} There exists a $\beta >0$ such that the tracking CLF \eqref{eq: Lyapunov} is lower bounded by \eqref{eq: error explicit} as in \eqref{eq:clf-lower-bound} for all $\bx \in \re^n$.
\end{lemma}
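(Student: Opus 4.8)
The plan is to establish the quadratic lower bound $V(\bx) \ge \beta \|\mb{e}(\bx)\|^2$ by reducing both sides to functions of the single Euler angle $\theta$ appearing in the Z-X-Z decomposition \eqref{eq: Euler}, and then invoking a standard trigonometric inequality. First I would compute $\mathrm{tr}(\mathbf{I} - \mathbf{R}_\mathrm{e})$ explicitly in terms of the Euler angles $(\psi,\theta,\varphi)$ of \eqref{eq: Euler}: using $\mathrm{tr}(\mathbf{R}_z(\psi)\mathbf{R}_x(\theta)\mathbf{R}_z(\varphi))$, a direct multiplication shows this trace depends on the angles in such a way that, after simplification, $\mathrm{tr}(\mathbf{I}-\mathbf{R}_\mathrm{e})$ is bounded below by a positive multiple of $1-\cos\theta$ (and in fact, because $\theta$ is the unique angle between $\mathbf b_z$ and $\tilde{\bk}(\bx)$, the relevant lower bound is $\mathrm{tr}(\mathbf{I}-\mathbf{R}_\mathrm{e}) \ge 1 - \cos\theta$, matching the 2D identity $V = \|\tilde\bk\|^2(1-\cos(\theta-\theta_\des))$ already derived in the unicycle case study). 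This gives
\begin{equation*}
    V(\bx) \ge \frac{\|\tilde{\bk}(\bx)\|^2}{2}\,(1 - \cos\theta).
\end{equation*}

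Next I would bound $\|\mb{e}(\bx)\|^2$ from above. From \eqref{eq: error explicit} we have $\|\mb{e}(\bx)\|^2 = \|\tilde{\bk}(\bx)\|^2 \sin^2\theta$, so the claimed inequality $V(\bx) \ge \beta \|\mb{e}(\bx)\|^2$ reduces — after cancelling the common factor $\|\tilde{\bk}(\bx)\|^2$ (the degenerate case $\|\tilde{\bk}(\bx)\| = 0$ is trivial since then $V = 0 = \|\mb{e}(\bx)\|$) — to the scalar inequality
\begin{equation*}
    \tfrac{1}{2}(1 - \cos\theta) \ge \beta \sin^2\theta = \beta(1-\cos\theta)(1+\cos\theta),
\end{equation*}
i.e. $\tfrac12 \ge \beta(1+\cos\theta)$ whenever $1-\cos\theta \neq 0$, which holds for all $\theta$ as soon as $\beta \le \tfrac14$. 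Hence $\beta = \tfrac14$ works, and this choice is independent of $\bx$, giving the bound globally on $\re^n$.

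The main obstacle I anticipate is the first step: carefully justifying that $\mathrm{tr}(\mathbf{I}-\mathbf{R}_\mathrm{e})$ is controlled by the single angle $\theta$ rather than all three Euler angles. The key observation that makes this clean is that $\theta$ in the Z-X-Z sequence is precisely the angle between the body $z$-axis $\mathbf b_z$ and the transformed desired axis (equivalently, $\cos\theta = \mathbf{e}_z^\top \mathbf{R}_\mathrm{e} \mathbf{e}_z = \mathbf{b}_z^\top \mathbf{R}_\mathrm{des}\mathbf{e}_z$), so that $\mathbf{R}_\mathrm{e}\mathbf{e}_z$ has third component $\cos\theta$; combined with $\mathrm{tr}(\mathbf I - \mathbf R_\mathrm{e}) \ge 1 - \mathbf{e}_z^\top\mathbf{R}_\mathrm{e}\mathbf{e}_z$ (which follows because $\mathbf{I}-\mathbf{R}_\mathrm{e}$ restricted appropriately is positive-semidefinite-like in the sense that $\mathrm{tr}(\mathbf{I}-\mathbf{R}) \ge 1 - \mathbf{v}^\top\mathbf{R}\mathbf{v}$ for any unit $\mathbf v$, since $\mathrm{tr}(\mathbf{I}-\mathbf{R}) = \sum_i (1 - \mathbf{e}_i^\top\mathbf{R}\mathbf{e}_i) \ge 1 - \mathbf{e}_z^\top\mathbf{R}\mathbf{e}_z$ using $\mathbf{e}_i^\top\mathbf{R}\mathbf{e}_i \le 1$), we obtain $\mathrm{tr}(\mathbf{I}-\mathbf{R}_\mathrm{e}) \ge 1 - \cos\theta$. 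Everything after that is the elementary trigonometric estimate above, so the bulk of the write-up is the geometric identification of $\theta$ and the positive-semidefiniteness-type trace bound; the final constant $\beta = 1/4$ then drops out with no further work.
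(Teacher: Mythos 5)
Your proof is correct, and it takes a genuinely different route from the paper's. The paper substitutes the Z-X-Z decomposition into the trace and computes it explicitly, obtaining $\mathrm{tr}\left(\mathbf{R}_z(\psi)\mathbf{R}_x(\theta)\mathbf{R}_z(\varphi)\right) = \cos(\varphi+\psi)\left(1+\cos\theta\right) + \cos\theta$, then bounds the $\cos(\varphi+\psi)$ factor to reach $V \geq \|\tilde{\bk}\|^2(1-\cos\theta)$ and multiplies by $\tfrac{1+\cos\theta}{2}\leq 1$ to land on $V \geq \tfrac{1}{2}\|\mb{e}\|^2$, i.e.\ any $\beta\in(0,\tfrac12]$. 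You instead avoid the explicit product entirely: you use that each diagonal entry of a rotation matrix is at most $1$, so $\mathrm{tr}(\mathbf{I}-\mathbf{R}_\mathrm{e}) \geq 1 - \mathbf{e}_z^\top\mathbf{R}_\mathrm{e}\mathbf{e}_z$, and the identification $\mathbf{e}_z^\top\mathbf{R}_\mathrm{e}\mathbf{e}_z = \mathbf{e}_z^\top\mathbf{R}_x(\theta)\mathbf{e}_z = \cos\theta$ (since $\mathbf{R}_z$ fixes $\mathbf{e}_z$ on both sides), giving $V \geq \tfrac{1}{2}\|\tilde{\bk}\|^2(1-\cos\theta)$; the remaining scalar step $\tfrac12(1-\cos\theta)\geq\beta(1-\cos\theta)(1+\cos\theta)$ is the same trigonometric estimate the paper uses, just read in the other direction, and your handling of the degenerate cases ($\|\tilde{\bk}\|=0$ and $\cos\theta=1$) is fine. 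What each approach buys: your coordinate-free trace bound is more elementary and sidesteps any Euler-angle bookkeeping (including the non-uniqueness of $\psi,\varphi$), but it gives away a factor of two, yielding $\beta=\tfrac14$ where the paper's exact computation permits $\beta=\tfrac12$; since the lemma only asserts existence of some $\beta>0$, this loss is immaterial here, though the paper's sharper constant would matter if one wanted to slacken the parameter condition $\lambda \geq \gamma + \tfrac{\varepsilon\mu}{4\beta}$ as much as possible.
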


\begin{proof}
By substituting (\ref{eq: Euler}) into (\ref{eq: Lyapunov}), and dropping the dependency on $\bx$ for brevity, we arrive at:

\begin{align}
    &V(\bx) = \frac{\|\tilde{\mathbf{k}}\|^2}{2}\mathrm{tr}\left( \mathbf{I} - \mathbf{R}_z(\psi)\mathbf{R}_x(\theta)\mathbf{R}_z(\varphi)\right)\\
    &= \frac{\|\tilde{\mathbf{k}}\|^2}{2}\left(\mathrm{tr}\left( \mathbf{I}\right) - \mathrm{tr}\left(\mathbf{R}_z(\psi)\mathbf{R}_x(\theta)\mathbf{R}_z(\varphi)\right)\right)\\
    &= \frac{\|\tilde{\mathbf{k}}\|^2}{2}\left(3 - \left(\cos(\varphi+\psi) + \cos(\varphi+\psi)\cos(\theta) +\cos(\theta)\right)\right)\\ 
    &= \frac{\|\tilde{\mathbf{k}}\|^2}{2}\left(4 - \left(\cos(\varphi+\psi)(\cos(\theta)+1) +(\cos(\theta) + 1)\right)\right)\\
    &= \|\tilde{\mathbf{k}}\|^2\left(2 - \left(\frac{\cos(\varphi+\psi)+1}{2}\right)(\cos(\theta)+1)\right).
\label{eq: Lyapunov_Euler}
\end{align}

Because $\frac{\cos(\cdot)+1}{2}$ exists on the domain [0,1], and $\cos(\cdot)+1$ exists on [0,2], we can say:

\begin{align}
    V(\bx) &= \|\tilde{\mathbf{k}}\|^2\left(2 - \left(\frac{\cos(\varphi+\psi)+1}{2}\right)(\cos(\theta)+1)\right)\\
    &\geq \|\tilde{\mathbf{k}}\|^2\left(2 - (\cos(\theta)+1)\right)\\
    &= \|\tilde{\mathbf{k}}\|^2\left(1 - \cos(\theta)\right)\\
    &\geq \|\tilde{\mathbf{k}}\|^2\left(1 - \cos(\theta)\right)\left(\frac{\cos(\theta)+1}{2}\right)\\
    &= \|\tilde{\mathbf{k}}\|^2\frac{1 - \cos^2(\theta)}{2}\\
    &= \|\tilde{\mathbf{k}}\|^2\frac{\sin^2(\theta)}{2}\\
    &= \frac{1}{2}\|\tilde{\mathbf{k}}\sin(\theta)\|^2\\
    &= \frac{1}{2}\|\mathbf{e}(\bx)\|^2\\
    &\geq \beta\|\mathbf{e}(\bx)\|^2, \quad \forall \beta \in \left(0,\;\frac{1}{2}\right].
\end{align}

Therefore, the choice of CLF (\ref{eq: Lyapunov}) satisfies \eqref{eq:clf-lower-bound}. 
\end{proof}
\end{document}